\title{Grundy Coloring \& friends, Half-Graphs, Bicliques}
\titlerunning{}
\author{Pierre Aboulker}{DI/ENS, PSL University, Paris, France}{pierreaboulker@gmail.com}{}{}
\author{\'{E}douard Bonnet}{Univ Lyon, CNRS, ENS de Lyon, Université Claude Bernard Lyon 1, LIP UMR5668, France}{edouard.bonnet@ens-lyon.fr}{https://orcid.org/0000-0002-1653-5822}{}
\author{Eun Jung Kim}{Universit\'{e} Paris-Dauphine, PSL University, CNRS UMR7243, LAMSADE, Paris, France}{eun-jung.kim@dauphine.fr}{https://orcid.org/0000-0002-6824-0516}{}
\author{Florian Sikora}{Universit\'{e} Paris-Dauphine, PSL University, CNRS UMR7243, LAMSADE, Paris, France}{florian.sikora@dauphine.fr}{https://orcid.org/0000-0003-2670-6258}{}
\authorrunning{P. Aboulker, \'E. Bonnet, E.~J.~Kim, F. Sikora}
\keywords{Grundy coloring, parameterized complexity, ETH lower bounds, $K_{t,t}$-free graphs, half-graphs}
\algrenewcommand\algorithmicrequire{\textbf{Precondition:}}
\algrenewcommand\algorithmicensure{\textbf{Postcondition:}}
\newcommand{\kmIS}{\textsc{$k$-Multicolored Independent Set}\xspace}
\newcommand{\kmSI}{\textsc{$k$-Multicolored Subgraph Isomorphism}\xspace}
\newcommand{\skmIS}{\textsc{$k$-MIS}\xspace}
\newcommand{\gr}{\textsc{Grundy Coloring}\xspace}
\newcommand{\pgr}{\textsc{Partial Grundy Coloring}\xspace}
\newcommand{\bchr}{\textsc{$b$-Chromatic Core}\xspace}
\newcommand{\cF}{\mathcal{F}}
\newcommand{\cC}{\mathcal{C}}
\theoremstyle{plain}
\newtheorem{observation}[theorem]{Observation}
\definecolor{g1}{rgb}{0,0,1}
\definecolor{g2}{rgb}{0.2,0,0.8}
\definecolor{g3}{rgb}{0.4,0,0.6}
\definecolor{g4}{rgb}{0.6,0,0.4}
\definecolor{g5}{rgb}{0.8,0,0.2}
\definecolor{g6}{rgb}{1,0,0}
\newcommand{\defparproblem}[4]{
 \vspace{1mm}
\noindent\fbox{
 \begin{minipage}{0.96\textwidth}
 \begin{tabular*}{\textwidth}{@{\extracolsep{\fill}}lr} #1 & {\bf{Parameter:}} #3 \\ \end{tabular*}
 {\bf{Input:}} #2 \\
 {\bf{Question:}} #4
 \end{minipage}
 }
 \vspace{1mm}
}
\begin{document}

\maketitle

\begin{abstract}
The \emph{first-fit} coloring is a heuristic that assigns to each vertex, arriving in a specified order $\sigma$, the smallest available color. 
The problem \gr\ asks how many colors are needed for the most adversarial vertex ordering $\sigma$, i.e., the maximum number of colors that the first-fit coloring requires over all possible vertex orderings.
Since its inception by Grundy in 1939, \gr\ has been examined for its structural and algorithmic aspects. 
A brute-force $f(k)n^{2^{k-1}}$-time algorithm for \gr\ on general graphs 
is not difficult to obtain, where $k$ is the number of colors required by the most adversarial vertex ordering. 
It was asked several times whether the dependency on $k$ in the exponent of $n$ can be avoided or reduced, and its answer seemed elusive until now. 
We prove that \gr is W[1]-hard and the brute-force algorithm is essentially optimal under the Exponential Time Hypothesis, thus settling this question by the negative. 

The key ingredient in our W[1]-hardness proof is to use so-called \emph{half-graphs} as a building block to transmit a color from one vertex to another. 
Leveraging the half-graphs, we also prove that \bchr\ is W[1]-hard, whose  parameterized complexity was posed as an open question by Panolan et al. [JCSS '17].
A natural follow-up question is, how the parameterized complexity changes in the absence of (large) half-graphs.
We establish fixed-parameter tractability on $K_{t,t}$-free graphs for \bchr\ and \pgr, making a step toward answering this question. 
The key combinatorial lemma underlying the tractability result might be of independent interest.
\end{abstract}

\section{Introduction}\label{sec:intro}

A coloring is said \emph{proper} if no two adjacent vertices receive the same color.
The \emph{chromatic number} of a graph $G$ denoted by $\chi(G)$ is the minimum number of colors required to properly color $G$.
Let us now consider a natural heuristic to build a proper coloring of a graph $G$. 
Given an ordering $\sigma$ of the vertices of $G$, consider each vertex of $G$ in the order $\sigma$ and assign to the current vertex the smallest possible color (without creating any conflict), i.e., the smallest color \emph{not} already given to one of its already colored neighbors. 
The obtained coloring is obviously proper and it is called a \emph{first-fit} or \emph{greedy} coloring.
The Grundy number, denoted by $\Gamma(G)$, is the \emph{largest} number of colors used by the first-fit coloring on some ordering of the vertices of $G$.
Thus $\Gamma(G)$ is an upper-bound to the output of a first-fit heuristic.

The Grundy number has been introduced in 1939 \cite{Grundy39}, but was formally defined only forty years ago, independently by Christen and Selkow~\cite{Christen} and by Simmons~\cite{Simmons}.
\gr in directed graphs already appears as an NP-complete problem in the monograph of Garey and Johnson \cite{gj79}.
The undirected version remains NP-hard on bipartite graphs~\cite{HSj} and their complements~\cite{Zaker-cobip}, chordal graphs~\cite{Sampaio12} and line graphs~\cite{HMY12}.
When the input is a tree, \gr can be solved in linear time~\cite{HHB82}. 
This result is generalized to bounded-treewidth graphs with an algorithm running in time $k^{O(w)}2^{O(wk)}n=O(n^{3w^2})$ for graphs of treewidth~$w$ and Grundy number $k$~\cite{TP97}, but this cannot be improved to $O^*(2^{o(w\log w)})$ under the ETH~\cite{BonnetFKS18}.
It is also possible to solve \gr in time $O^*(2.443^n)$~\cite{BonnetFKS18}.

In 2006, Zaker~\cite{Zaker06} observed that since a minimal witness (we will formally define a witness later) for Grundy number $k$ has size at most $2^{k-1}$, the brute-force approach gives an algorithm running in time $f(k)n^{2^{k-1}}$, that is, an XP algorithm in the words of parameterized complexity.
Since then it has been open whether \gr can be solved in FPT time, i.e., $f(k)n^{O(1)}$ (where the exponent does not depend on $k$).
FPT algorithms were obtained in chordal graphs, claw-free graphs, and graphs excluding a fixed minor~\cite{BonnetFKS18}, or with respect to the dual parameter $n-k$~\cite{HSj}.
The parameterized complexity of \gr in general graphs was raised as an open question in several papers in the past decade~\cite{Sampaio12,HS13,Gastineau14,BonnetFKS18}.

Closely related to Grundy coloring is the notion of \emph{partial Grundy coloring} and \emph{$b$-coloring}. 
We say that a proper coloring $V_1\uplus \cdots \uplus V_k$ is 
a \emph{partial Grundy coloring} of order $k$ if there exists $v_i\in V_i$ for each $i\in [k]$ such that 
$v_i$ has a neighbor in every $V_j$ with $j<i$. The problem \pgr takes a graph $G$ and a positive integer $k$, and 
asks if there is a partial Grundy coloring of order $k$.
Erd\H{o}s et al.~\cite{Erdos} showed that the partial Grundy number coincides with the so-called \emph{upper ochromatic} number.
This echoes another result of Erd\H{o}s et al.~\cite{Erdos1987} that Grundy number and ochromatic number (introduced by Simmons~\cite{Simmons}) are the same.

The \emph{$b$-chromatic core of order $k$} of a graph $G$ is a vertex-subset $C$ of $G$ with the following property: 
$C$ admits a partition into $V_1\uplus \cdots \uplus V_k$ such that there is $v_i\in V_i$ for each $i\in [k]$ 
which contains a neighbor in every $V_j$ with $j\neq i$. 
The goal of the problem \bchr\ is to determine whether an input graph $G$ contains a $b$-chromatic core of order $k$. 
This notion was studied in~\cite{EffantinGT16,PanolanPS17} in relation to \emph{$b$-coloring}, which is a proper coloring 
such that for every color $i$, there is a vertex of color $i$ which neighbors a vertex of every other color. 
The maximum number $k$ such that $G$ admits a $b$-coloring with $k$ colors is called the \emph{$b$-chromatic number} of $G$. 
In~\cite{PanolanPS17}, it was proven that deciding whether a graph $G$ has $b$-chromatic number at least $k$ is $W[1]$-hard parameterized by $k$.
The problem might be even harder since no polytime algorithm is known when $k$ is constant.
The authors left it as an open question whether \bchr is W[1]-hard or FPT.
 
\medskip

\noindent {\bf Our contribution: the half-graph is key.} 
We prove that \gr is W[1]-complete, thus settling the open question posed in~\cite{Sampaio12,HS13,Gastineau14,BonnetFKS18}. 
More quantitatively we show that the double-exponential XP algorithm is essentially optimal.
Indeed we prove that there is no computable function $f$ such that \gr is solvable in time $f(k)n^{o(2^{k-\log k})}$, unless the ETH fails.
This further answers by the negative an alternative question posted in \cite{Sampaio12,HS13}, whether there is an algorithm in time $n^{k^{O(1)}}$.

A key element in the hardness proof of \gr\ is what we call a \emph{half-graph} (definition in~\cref{subsec:halfgraph}). 
The main obstacle encountered when one sets out to prove W[1]-hardness of \gr\ is the difficulty of propagating a chosen color from a vertex to another while keeping the Grundy number low (i.e., bounded by a function of $k$).
Employing half-graphs turns out to be crucial to circumvent this obstacle, which we further examine in Section~\ref{sec:halfgrundy}.
Leveraging half-graphs as color propagation apparatus,  
we also prove that \bchr\ is W[1]-complete (albeit with a very different construction). This settles the question posed by~\cite{PanolanPS17}. 
 
\medskip

\noindent {\bf Our contribution: delineating the boundary of tractability.}
  All three problems, \gr, \pgr, and \bchr are FPT on nowhere dense graphs, for the parameter $k$ being the number of desired colors.
  The existence of each induced witness can be expressed as a first-order formula on at most $2^{k-1}$ variables in the case of \gr, and on at most $k^2$ variables in the case of \pgr and \bchr.
  The problem is therefore expressible in first-order logic as a disjunction of the existence of every induced witness while the number of induced witnesses is bounded by $2^{2^{2(k-1)}}$.
  And first-order formulas can be decided in FPT time on nowhere dense graphs~\cite{Grohe17}.
The next step is $K_{t,t}$-free graphs, i.e., those graphs without a biclique $K_{t,t}$ as a (not necessarily induced) subgraph, which is a dense graph class that contains nowhere dense graphs and graphs of bounded degeneracy.
In the realm of parameterized complexity, $K_{t,t}$-free graphs have been observed to admit FPT algorithms for otherwise W[1]-hard problems~\cite{TelleV12}.

We prove that \pgr and \bchr are fixed-parameter tractable on $K_{t,t}$-free graphs, even in the parameter $k+t$, now assuming that $t$ is not a fixed constant.
To this end, a combinatorial lemma plays a crucial role by letting us rule out the case when many vertices have large degree: if there are many vertices of large degree in a $K_{t,t}$-free graph, one can find a collection of $k$ vertex-disjoint and pairwise non-adjacent stars on $k$-vertices, which is a witness for \bchr and \pgr. 
Now, we can safely confine the input instances to have bounded degrees, save a few vertices. 
We present an FPT algorithm that works under this setting. 

\medskip

\noindent {\bf Limits and further questions.} Noting that the W[1]-hardness constructions for 
\gr and \bchr heavily rely on (large) half-graphs as a unit for color propagation, one may wonder 
how the parameterized complexity shall be affected if large half-graphs are excluded. 
The aforementioned tractability of \bchr is a step toward this end: recall that 
excluding a $K_{t,t}$ implies excluding a large half-graph $H_{t,t}$ (but not vice versa). 
We point out that our key combinatorial lemma for $K_{t,t}$-free graphs collapses on $H_{t,t}$-free graphs. Therefore, 
whether \gr and \bchr are fixed-parameter tractable on $H_{t,t}$-free graphs remains open. 
For \gr, whether it admits an FPT algorithm on $K_{t,t}$-free graphs is unsettled. 

The parameterized complexity of  \pgr on general graphs resisted our efforts so far.  
Again the main barrier toward a W[1]-hardness proof is to come up with a right unit for color propagation. 
Note that half-graphs cannot be used for \pgr as the partial Grundy number is already high on a large half-graph. 
This phenomenon suggests that the class of $H_{t,t}$-free graphs is a promising zone for inspection in order to establish the parameterized complexity of \pgr on general graphs.

\medskip

\noindent {\bf Organization of the rest of the paper.}
In~\cref{sec:prelim}, we lay out the terminology and preliminary results.
\cref{sec:barriers} exhibits the main issue in establishing the parameterized hardness of \gr.
In~\cref{sec:halfgrundy}, we show how to overcome this issue and prove that \gr is W[1]-hard.
In~\cref{sec:core}, being even more restricted in our gadget choice, we show that \bchr is W[1]-hard.
The tractability of \bchr and \pgr in $K_{t,t}$-free graphs is finally presented in~\cref{sec:tractability}.

\section{Preliminaries}\label{sec:prelim}

For any integer $i, j$, we denote by $[i,j]$ the set of integers that are at least $i$ and at most $j$, and $[i]$ is a short-hand for $[1,i]$.
We use the standard graph notations~\cite{DBLP:books/daglib/0030488}: for a graph $G$, $V(G)$ denotes the set of vertices of $G$, $E(G)$ denotes the set of edges. 
A vertex $u$ is a neighbor of $v$ if $uv$ is an edge of $G$.
The \emph{open neighborhood} of a vertex $v$ is the set of all neighbors of $v$ and $N[v]$ denotes the \emph{closed neighborhood} of $v$ defined as $N(v)\cup \{v\}$.
The open (closed, respectively) neighborhood of a vertex-set $S$ is $\bigcup_{v\in S} N(v) \setminus S$ ($\bigcup_{v\in S} N(v) \cup S$, respectively).
For a vertex-set $Y\subseteq V(G)$, we denote $N(v)\cap Y$ ($N[v]\cap Y$, respectively) simply as $N_Y(v)$ ($N_Y[v]$, respectively) and the same applies the open and closed neighborhood of a vertex-set $S$.
For two disjoint vertex-sets $X$ and $Y$, we say that $X$ is (anti-)complete with $Y$ if 
every vertex of $X$ is (non-)adjacent with every vertex of $Y$.

\subsection{Half-graphs}\label{subsec:halfgraph}

We call \emph{anti-matching} the complement of an induced matching.
The \emph{anti-matching of height $t$} is the complement of $t$ edges.
It will soon be apparent that all the coloring numbers considered in this paper are lowerbounded by $t$, in presence of an anti-matching of height $t$.
Therefore in the subsequent FPT reductions, we will not have the luxury to have anti-matchings of unbounded size.
This will constitute an issue since they are useful to propagate choices.
Imagine we have two sets $A$ and $B$ of size unbounded by the parameter, and we want to relate a choice in $A$ to the same choice in $B$.
Let us put an antimatching between $A$ and $B$.
Trivially independent sets of size 2 will correspond to consistent choices.
So it all boils down to expressing our problem in terms of finding large enough independent sets.
Now this option is not available, another way to propagate choices is to use \emph{half-graphs}.

We call \emph{half-graph} a graph whose vertices can be partitioned into $(A,B)$ such that there is no induced $2K_2$ in the graph induced by the edges with one endpoint in $A$ and the other endpoint in $B$, and $G[A]$ and $G[B]$ are both edgeless.
These graphs are sometimes called \emph{bipartite chain graphs}.
Equivalently we say that $(A,B)$ induces, or by a slight abuse of notation, \emph{is} a half-graph if $A$ and $B$ can be totally ordered, say $a_1, \ldots, a_{\lvert A \rvert}$ and $b_1, \ldots, b_{\lvert B \rvert}$ such that $N_B(a_1) \supseteq N_B(a_2) \supseteq \ldots \supseteq N_B(a_t)$ and if $a_ib_j$ is an edge then for every $j' \in [j+1,t]$, $a_ib_{j'}$ is also an edge.
The orderings $a_1, \ldots, a_{\lvert A \rvert}$ and $b_1, \ldots, b_{\lvert B \rvert}$ are called \emph{orders} of the half-graph.
Note that the orders of $A$ and $B$ ``orient the half-graph from $A$ to $B$''.
What we mean is that the same orderings do not testify that $(B,A)$ is a half-graph.
For that, we would need to take the reverse orders.
This technicality will have its importance in limiting the notion of ``path'' (or ``cycle'') of graphs, when we will impose that the orientation is the same throughout the path (or cycle).

\emph{The half-graph of height $t$} is a bipartite graph with partition $(A=\{a_1, \ldots, a_t\}, B=\{b_1, \ldots, b_t\})$ such that there is an edge between $a_i$ and $b_j$ if and only if $i < j$.
We denote this graph by $H_{t,t}$.
The \emph{level} of a vertex $v \in A$ ($v \in B$) in the half-graph of height $t$ is its index in the ordering of $A$ (or $B$).
Note that this is not uniquely defined for a half-graph in general, but it is for \emph{the} (canonical) half-graph of height $t$.
Any half-graph can be obtained from the half-graph of height $t$ (for some $t$) by duplicating some vertices.
The name \emph{half-graph} actually comes from Erd\H{o}s and Hajnal (see for instance \cite{Erdos84}).
More precisely what Erd\H{o}s defines as a half-graph corresponds in this paper to the (canonical) half-graph of height $t$.

We now define ``path'' and ``cycle'' of half-graphs.
A \emph{length-$\ell$ path of half-graphs} is a graph $H$ whose vertex-set can be partitioned into $(H_1,H_2,\ldots,H_{\ell+1})$ such that the three following conditions hold:
\begin{itemize}
  \item(i) there is no edge between $H_i$ and $H_j$ when $\lvert i-j \rvert \geqslant 2$,
  \item(ii) for every $i \in [\ell]$, $H[H_i \cup H_{i+1}]$ is a half-graph with bipartition $(H_i,H_{i+1})$, and 
  \item(iii) for every $i \in [2,\ell]$, the ordering of $H_i$ in the half-graph induced by $(H_{i-1},H_i)$ is the same as in the half-graph $(H_i,H_{i+1})$.
\end{itemize}
A \emph{length-$(\ell+1)$ cycle of half-graphs} is the same but there are edges between $H_1$ and $H_{\ell+1}$, namely a half-graph respecting the ordering of $H_1$ and $H_{\ell+1}$ in the half-graphs induced by $(H_1,H_2)$ and $(H_\ell,H_{\ell+1})$.
We denote by $H_{\ell \times t}$ the \emph{length-$\ell$ path of half-graphs} where each half-graph is isomorphic to $H_{t,t}$, and $H^*_{\ell \times t}$ the \emph{length-$\ell$ cycle of half-graphs} where each half-graph is isomorphic to $H_{t,t}$.
The graph $H^*_{\ell \times t}$ is also a convenient way of propagating a ``one-among-$t$'' consistent choice, since the only maximum independent sets of $H^*_{\ell \times t}$ take all $\ell+1$ vertices at the same level. 

\subsection{Grundy coloring}

A proper coloring of $G$ with $k$ color classes $V_1 \uplus \cdots \uplus V_k$ is a \emph{Grundy coloring} of order $k$ if for each $i\in [2,k]$, every vertex $v_i\in V_i$ has a neighbor in every $V_j$ with $j<i$.
The \emph{Grundy number} $\Gamma(G)$ is defined as the largest $k$ such that $G$ admits a Grundy coloring of order $k$. 
We say that an induced subgraph $H$ of $G$ is a \emph{witness achieving} (color) $k$ if $H$ has a Grundy coloring of order at least $k$; in this case, we simply say that $H$ is a  \emph{$k$-witness} (also called \emph{atom} by Zaker~\cite{Zaker06} or \emph{critical}~\cite{GyarfasKL97}).
We say that a $k$-witness is \emph{minimal} if there is no proper induced subgraph of it whose Grundy number is at least $k$. 
A graph $G$ has Grundy number at least $k$ if and only if it contains a minimal $k$-witness as an induced subgraph~\cite{Zaker06}. 
This gives us an equivalent formulation of \gr, which we frequently employ as a working definition.

\defparproblem{\gr}{An integer $k > 0$, a graph $G$.}{$k$}{Is there a vertex-subset $S \subseteq V(G)$ such that $G[S]$ is a Grundy minimal $k$-witness?}

Notice that in a minimal $k$-witness, exactly one vertex receives color $k$. 
It is not difficult to see that a minimal $k$-witness is connected. 
A \emph{colored} witness is a witness together with a coloring of its vertices.
The coloring can equivalently be given by the vertex ordering, by the coloring function, or by the partition into color classes.
When $k$ is not specified, a witness for $G$ is the witness which allows a Grundy coloring of order $\Gamma(G)$.

Let $V_1 \uplus \cdots \uplus V_k$ be a Grundy coloring of order $k$.
We say that a vertex $u$ colored $c'$  \emph{supports} $v$ colored $c$ if $u$ and $v$ are adjacent and $c' < c$. 
A vertex $v$ colored in $c$ is said to be \emph{supported} if the colors of the  vertices supporting $v$ span all colors from 1 to $c-1$. 


It was observed that the largest minimal $k$-witness uses $2^{k-1}$ vertices~\cite{Zaker06}.
These witnesses are implemented by a family of rooted trees called \emph{binomial trees} (see for instance \cite{BonnetFKS18}).
The set of binomial trees $(T_k)_ {k \geqslant 1}$ is defined recursively as follows:
\begin{itemize}
\item $T_1$ consists of a single vertex, declared as the root of $T_1$.
\item $T_k$ consists of two binomial trees $T_{k-1}$ such that the root of the first one is a child of the root of the other. 
The root of the latter is declared as the root of $T_k$.
\end{itemize}

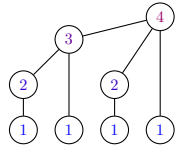
\begin{figure}[h!]
\centering
\begin{tikzpicture}[scale=0.6,transform shape]
\node[draw,circle] (r4) at (1,-0.5) {\textcolor{g4}{4}};
\node[draw,circle] (r3) at (-1,-1) {\textcolor{g3}{3}};
\node[draw,circle] (r32) at (-2,-2) {\textcolor{g2}{2}};
\node[draw,circle] (r321) at (-2,-3) {\textcolor{g1}{1}};
\node[draw,circle] (r31) at (-1,-3) {\textcolor{g1}{1}};
\node[draw,circle] (r2) at (0,-2) {\textcolor{g2}{2}};
\node[draw,circle] (r21) at (0,-3) {\textcolor{g1}{1}};
\node[draw,circle] (r1) at (1,-3) {\textcolor{g1}{1}};

\draw (r4) -- (r3);
\draw (r4) -- (r2);
\draw (r4) -- (r1);
\draw (r3) -- (r32);
\draw (r3) -- (r31);
\draw (r32) -- (r321);
\draw (r2) -- (r21);

\end{tikzpicture}
\caption{The binomial tree $T_4$, where the labels denote the color of each vertex in a first-fit coloring achieving the highest possible color.}
\label{fig:tk}
\end{figure}

We outline some basic properties of $k$-witnesses and binomial trees $T_k$. 
The first observation is straightforward.

\begin{observation}\label{obs:subset}
Any subset of $k'$ color classes of a $k$-witness, with $k' < k$, induces a $k'$-witness.
\end{observation}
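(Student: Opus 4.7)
The plan is to give a direct constructive argument: from the given Grundy coloring of the $k$-witness, obtain a Grundy coloring of order $k'$ of the chosen induced subgraph simply by relabeling colors, preserving the relative order.

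Concretely, let $H$ be a $k$-witness with Grundy coloring $V_1 \uplus \cdots \uplus V_k$, and let $V_{i_1}, \ldots, V_{i_{k'}}$ be the selected $k'$ classes with $i_1 < i_2 < \cdots < i_{k'}$. Consider the induced subgraph $H' = H[V_{i_1} \cup \cdots \cup V_{i_{k'}}]$, and define a new coloring of $H'$ by assigning color $j$ to every vertex in $V_{i_j}$. The coloring is proper because the partition refines the original color classes and $H'$ is an induced subgraph of $H$. To verify that it is a Grundy coloring of order $k'$, I would take any $j \in [2,k']$ and any $v \in V_{i_j}$: by the Grundy property of the original coloring, $v$ has a neighbor in $V_i$ for every $i < i_j$, and in particular in each of $V_{i_1}, \ldots, V_{i_{j-1}}$. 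These neighbors lie in $H'$ and receive the new colors $1, 2, \ldots, j-1$ respectively, so $v$ is supported at every color below $j$. Hence $H'$ admits a Grundy coloring of order $k'$ and is therefore a $k'$-witness.

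There is no real obstacle here; the only subtlety worth spelling out is that we must select the neighbors guaranteed by the original Grundy property \emph{within} the retained classes $V_{i_1}, \ldots, V_{i_{j-1}}$ rather than arbitrary lower classes, which is automatic since $\{i_1, \ldots, i_{j-1}\} \subseteq [1,i_j-1]$. No minimality of the witness is needed, and the same argument shows one can even fix an arbitrary injection $[k'] \hookrightarrow [k]$ that preserves order.
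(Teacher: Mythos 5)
Your argument is correct and is the natural straightforward proof; the paper simply calls the observation "straightforward" and omits it. The key point you correctly isolate—that a vertex of $V_{i_j}$ has a neighbor in \emph{every} lower class and hence in particular in each retained class $V_{i_1},\ldots,V_{i_{j-1}}$—is exactly what makes the relabeled coloring a Grundy coloring of the induced subgraph.
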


The following is shown in a more general form in Lemma 7 of \cite{BonnetFKS18}.

\begin{lemma}\label{lem:subtree}
  Let $i \in [2,k-2]$, $X \subseteq V(T_k)$ be a subset of roots of $T_i$ whose parent is a root of $T_{i+1}$, and $T'_k$ be a tree obtained from $T_k$ by removing the subtree $T_{i-1}$ of every vertex in $X$.
  We assume that $T'_k$ is an induced subgraph of a graph $G$ and that $N(V(G) \setminus V(T'_k)) = X$.
  Then the three following conditions are equivalent in $G$:
  \begin{itemize}
  \item(i) There is a Grundy coloring that colors $k$ the root of $T'_k$.
  \item(ii) There is a Grundy coloring that colors $i$ every vertex of $X$ without coloring their parent in $T'_k$ first.
  \item(iii) There is a Grundy coloring that colors $i-1$ at least one neighbor of each vertex of $X$ without coloring any vertex of $T'_k$ first.
  \end{itemize}
\end{lemma}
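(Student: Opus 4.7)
I would prove the three statements equivalent by the cyclic chain (i) $\Rightarrow$ (ii) $\Rightarrow$ (iii) $\Rightarrow$ (i). The skeleton of each step rests on two facts about the binomial tree: first, that $\Gamma(T_j) = j$ with equality only in the ``canonical'' first-fit colouring that assigns each root of a $T_{j'}$-subtree the colour $j'$; and second, that the isolation property $N(V(G)\setminus V(T'_k)) = X$ forces the only coupling between $T'_k$ and the outside of $G$ to happen at the vertices of $X$.

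\textbf{(i) $\Rightarrow$ (ii).} Suppose a Grundy colouring of $G$ colours the root $r$ of $T'_k$ with $k$. Since $r\notin X$, its only neighbours in $G$ are its $k-1$ children in $T'_k$, which are roots of $T_{k-1},\ldots,T_1$. By induction descending the tree, a standard counting argument (each $T_j$-subtree root can contribute at most colour $j$, and the $k-1$ children must jointly realise colours $1,\ldots,k-1$) forces each such child to receive exactly its binomial depth, and to be coloured before $r$. The same argument propagates all the way down to the parent $p$ of any $x\in X$: $p$ (a root of $T_{i+1}$) must receive colour $i+1$, after its children, which are roots of $T_i,T_{i-1},\ldots,T_1$. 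In particular $x$ is coloured $i$ and $p$ is coloured strictly later, proving (ii).

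\textbf{(ii) $\Rightarrow$ (iii).} Fix a Grundy colouring $c$ with ordering $\sigma$ realising (ii). At the moment $x \in X$ is coloured $i$, colours $1,\ldots,i-1$ must appear among already coloured neighbours. The parent of $x$ in $T'_k$ is forbidden by assumption; the only remaining in-tree neighbours are the children of $x$ in $T'_k$, which are roots of $T_{i-2},\ldots,T_1$ and hence, by the same binomial argument restricted to $x$'s remaining subtree, can contribute at most colours $i-2,\ldots,1$. Therefore colour $i-1$ must come from some neighbour $z_x \in V(G)\setminus V(T'_k)$. Now define $\sigma'$ as the restriction of $\sigma$ to $V(G)\setminus V(T'_k)$. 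Because $N(V(G)\setminus V(T'_k))=X$ and every vertex of $X$ is coloured $i$ in $c$, no vertex of $T'_k$ ever supports a colour $<i$ for a vertex outside $T'_k$; hence the restriction of $c$ to $\sigma'$ is a valid first-fit colouring and each $z_x$ still receives colour $i-1$, giving (iii).

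\textbf{(iii) $\Rightarrow$ (i).} Starting from a colouring guaranteed by (iii), append the vertices of $T'_k$ to the ordering in a bottom-up traversal that mimics the canonical binomial-tree Grundy colouring. Within each intact subtree, recursion yields the usual colours $1,2,\ldots,j$ at the root of each $T_j$. At a vertex $x\in X$ we rely on the external neighbour $z_x$ (already coloured $i-1$) to supply the colour missing from the truncated subtree; together with the children of $x$ contributing $1,\ldots,i-2$, first-fit then assigns $x$ the colour $i$. The inductive extension lifts through the parent $p$ (getting colour $i+1$) all the way to $r$, which ends up coloured $k$.

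\textbf{Main obstacle.} The only step requiring real care is the recursive ``rigidity'' claim used implicitly in (i) $\Rightarrow$ (ii) and in the children-bound used in (ii) $\Rightarrow$ (iii): namely that inside $T'_k$, and even allowing external colouring, the Grundy colour attainable at the root of a (possibly truncated) $T_j$-subtree with parent not yet coloured is still bounded by $j$. Making this sub-lemma precise -- handling simultaneously several removed subtrees indexed by $X$ and confirming that the $X$-vertices are the only escape route to ``extra'' colours -- is where all the bookkeeping has to be done; the rest is a straightforward first-fit bookkeeping that one can import from Lemma 7 of \cite{BonnetFKS18}.
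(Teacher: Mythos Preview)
Your argument is correct and follows essentially the same route as the paper's: the rigidity of the canonical first-fit colouring of $T_k$ for (i)$\Rightarrow$(ii), the observation that the colour-$(i-1)$ support of each $x\in X$ must come from outside $T'_k$ for (ii)$\Rightarrow$(iii), and the bottom-up reconstruction for the constructive direction. The paper organises the equivalence as four implications rather than your three-cycle and is terser throughout (in particular it does not spell out the restriction step in (ii)$\Rightarrow$(iii)), likewise deferring the careful bookkeeping to Lemma~7 of~\cite{BonnetFKS18}.
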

\begin{proof}
  (iii) implies (ii), and (ii) implies (i) are a direct consequence of the optimum Grundy coloring of a binomial tree, as depicted in \cref{fig:tk}.
  We show that (i) implies (ii).
  This is equivalent to showing that the only way for a Grundy coloring of $T_k$ to color its root $k$, even when there is joker that enables us to give any color to a vertex of $X$, is to respect the coloring of \cref{fig:tk}.
  This holds since coloring a vertex of $X$ with a color greater than $i$ prevents from coloring its parent $w$ with color $i+1$.
  Indeed in that case $w$ cannot find a neighbor colored $i$ (which is not its own parent).
  Coloring a vertex of $X$ with a color smaller than $i$, simply will not work, since the Grundy coloring of $T_k$ that gives color $k$ to its root is unique.
  Finally (ii) implies (iii), since for every vertex of $X$, its only neighbor that can obtain color $i-1$ and is not its parent is outside $T'_k$.
  For a complete proof, see Lemma 7 of \cite{BonnetFKS18}. 
\end{proof}

\begin{lemma}\label{lem:twin}
  If $u$ and $v$ are false twins in $G$, i.e., $N_G(u)=N_G(v)$, then $\Gamma(G) = \Gamma(G - \{v\})$.
\end{lemma}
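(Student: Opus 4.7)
The plan is to establish both inequalities $\Gamma(G - \{v\}) \leqslant \Gamma(G)$ and $\Gamma(G) \leqslant \Gamma(G - \{v\})$ separately. For the first, I would appeal to the monotonicity of the Grundy number under induced subgraphs: starting from a Grundy coloring of $G - \{v\}$ into $k$ color classes, one orders the vertices class-by-class and appends $v$ at the end, then runs first-fit on $G$. Since every vertex of $G - \{v\}$ sees the same colored predecessors as in the original ordering, it retains its color, and the resulting first-fit coloring of $G$ uses at least $k$ colors.

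For the reverse direction, the crucial observation will be that in any Grundy coloring $c$ of $G$, the false twins $u$ and $v$ must receive the same color. First, since $u \notin N_G(u) = N_G(v)$, the pair $\{u,v\}$ is non-adjacent. Suppose, for contradiction, $c(u) < c(v)$. Applying the Grundy support condition to $v$ produces some $y \in N_G(v) = N_G(u)$ with $c(y) = c(u)$; then the adjacent vertices $u$ and $y$ share a color, contradicting properness. The symmetric case $c(v) < c(u)$ is ruled out identically, so $c(u) = c(v)$.

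Equipped with $c(u) = c(v)$, I will show that the restriction of $c$ to $V(G) \setminus \{v\}$ is a Grundy coloring of $G - \{v\}$ with the same $k$ colors. Properness is automatically preserved under vertex deletion, and color $c(v)$ survives in the restriction since it is still borne by $u$. For the support condition, fix $w \neq v$ with $c(w) > 1$ together with any color $j < c(w)$: if $j \neq c(v)$, then any color-$j$ neighbor of $w$ in $G$ is already distinct from $v$ and therefore survives in $G - \{v\}$; if $j = c(v)$ and $v$ happened to be $w$'s only color-$c(v)$ neighbor in $G$, then $w$ is adjacent to $v$ and hence, by the false-twin property, adjacent to $u$ as well, and $u$, being of color $c(v)$, supplies the missing support. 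This yields $\Gamma(G - \{v\}) \geqslant k = \Gamma(G)$, completing the proof. The whole argument hinges on the small but decisive observation that false twins share a color in every Grundy coloring; once this is established, the remaining verification is a routine check and poses no real obstacle.
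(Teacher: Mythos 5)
Your proposal is correct and follows essentially the same approach as the paper: the crux is the observation that false twins receive the same color in any Grundy coloring (via the support-plus-properness contradiction), and then the restriction of the coloring to $G - \{v\}$ is verified to remain a Grundy coloring because $u$ can stand in for $v$ whenever $v$ was supplying support. Your write-up spells out the routine verification steps (the class-by-class ordering for the easy direction and the case analysis on $j = c(v)$ vs.\ $j \neq c(v)$) that the paper compresses into "observe that $u$ and $v$ support and are supported by the same set of vertices."
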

\begin{proof}
  Let $V_1\uplus\cdots \uplus V_k$ be an arbitrary Grundy coloring.
  We first observe that $u$ and $v$ must be in the same color class.
  Suppose the contrary, and assume that the color $c$ of $u$ is higher than the color $c'$ of $v$. 
  Since $u$ is supported by a vertex of color $c'$ and $N(u)=N(v)$, $v$ has a neighbor of the same color.
  This contradicts that the Grundy coloring is a proper coloring.
  Finally, observe that $u$ and $v$ support and are supported by the same set of vertices, which proves the lemma.
\end{proof}

\begin{lemma}\label{lem:crownSmallDeg}
  Let $H$ be an induced subgraph of $G$ such that all the vertices of $N(V(H))$ have degree at most $s$.
  Then no vertex of $V(H)$ can get a color higher than $\Gamma(H)+s$ in a Grundy coloring of $G$.
\end{lemma}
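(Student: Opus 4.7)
The plan is to extract from the given Grundy coloring of $G$ a Grundy coloring of an induced subgraph of $H$, with an additive offset of $s$, so that the highest color attained on $V(H)$ translates (up to $-s$) into a Grundy coloring of $H$. Throughout, let $c(\cdot)$ denote the fixed Grundy coloring of $G$.

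The key observation I would isolate first is the following claim: for every $u \in V(H)$ and every supporter $w \in N_G(u) \cap N(V(H))$ of $u$ (meaning $c(w) < c(u)$), we have $c(w) \leq s$. Indeed, $w$ needs a supporter at each of the colors $1,2,\ldots, c(w)-1$ among its neighbors, and $u$, whose color strictly exceeds $c(w)$, cannot be one of them. Hence $w$ has at least $c(w)-1$ neighbors other than $u$, giving $c(w) - 1 \leq \deg(w) - 1 \leq s - 1$. Consequently, any supporter with color at least $s+1$ of a vertex of $V(H)$ must lie in $V(H)$.

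I would then consider $V' = \{u \in V(H) : c(u) \geq s+1\}$ and the shifted coloring $c'(u) = c(u) - s$ on $V'$. By the claim above, each $u \in V'$ has, inside $V(H)$, a neighbor of every original color $s+1, s+2, \ldots, c(u)-1$; and since such neighbors have color at least $s+1$, they automatically belong to $V'$. So in $H[V']$ equipped with $c'$, each vertex has a neighbor of every smaller shifted color, and $c'$ is a genuine Grundy coloring of $H[V']$ of order $\max_{u \in V'} c(u) - s$. Thus $\Gamma(H) \geq \Gamma(H[V']) \geq \max_{u \in V'} c(u) - s$.

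Plugging in any $v \in V(H)$: if $c(v) > s$ then $v \in V'$ and so $c(v) \leq \Gamma(H) + s$, while if $c(v) \leq s$ the bound is trivial since $\Gamma(H) \geq 1$. The one step that requires a moment's thought, and the source of the tight constant $s$ rather than the naive $s+1$, is the initial claim that external supporters have color at most $s$: the general bound for Grundy colorings would give only $s+1$, and the improvement comes precisely from the fact that the supported vertex itself sits in $N(w)$ but is excluded from $w$'s supporter list.
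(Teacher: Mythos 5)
Your proof is correct, and its engine is the same as the paper's: a vertex of $N(V(H))$ that supports a higher-colored vertex of $V(H)$ must itself have color at most $s$. You obtain this locally --- if $w\in N(V(H))$ supports some $u\in V(H)$ with $c(u)>c(w)$, then $w$ needs $c(w)-1$ lower-colored neighbors and additionally counts $u$ among its neighbors, so $\deg(w)\geq c(w)$, forcing $c(w)\leq s$ --- whereas the paper obtains it globally, by passing to a \emph{minimal} colored witness (so every vertex in it must support something of higher color) and noting that a vertex of degree at most $s$ with color exactly $s+1$ would have no neighbor left with which to support anything above it. The packaging then diverges: the paper discards the bottom $s$ color classes (using that a subset of color classes of a witness is still a witness), argues the resulting witness avoids $N(V(H))$, and restricts to the connected component of the high-colored vertex to land inside $H$; you instead restrict directly to $V'=\{u\in V(H):c(u)>s\}$, shift colors by $s$, and observe that this is already a Grundy coloring of $H[V']$, whence $\Gamma(H)\geq \max_{u\in V'}c(u)-s$. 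Both routes give the tight constant $s$ rather than $s+1$; yours is a slightly cleaner path that bypasses the minimal-witness and connected-component steps.
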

\begin{proof}
  Suppose that a vertex $v$ of $H$ gets a color at least $\Gamma(H)+s+1$ in a Grundy coloring of $G$.  
  Consider a minimal colored witness $H'$ of $G$ achieving $\Gamma(H)+s+1$ in which $v$ is the unique vertex getting the color $\Gamma(H)+s+1$.
  The vertices of $N(V(H))$, having degree at most $s$, can get color at most $s+1$.
  Since we are dealing with a minimal colored witness achieving a color strictly larger than $s+1$, they can get color at most $s$.
  Now we remove the first $s$ color classes from our colored witness.
  By \cref{obs:subset} we obtain a Grundy coloring achieving color $\Gamma(H)+1$, and by the previous remark, it no longer intersects $N(V(H))$.
  The connected component containing $v$ of that new witness achieves color $\Gamma(H)+1$ inside $H$, a contradiction.
\end{proof}

An immediate corollary of~\cref{lem:crownSmallDeg} is that a vertex with only few neighbors of high-degree cannot receive a high color.
\begin{corollary}\label{cor:fewHighDegNeighbors}
  In any greedy coloring, a vertex with at most $t$ neighbors that have degree at most $s$ cannot receive a color higher than $s+t+1$. 
\end{corollary}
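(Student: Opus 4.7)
The plan is to argue directly from the first-fit rule rather than to appeal formally to Lemma~\ref{lem:crownSmallDeg}. Fix any greedy coloring of $G$ and let $v$ be a vertex receiving color $c$; under the intended reading (corroborated by the preceding sentence), $v$ has at most $t$ neighbors of degree strictly greater than $s$, and the target bound is $c\leq s+t+1$. By the first-fit rule, for every $i\in[c-1]$ the vertex $v$ has a neighbor colored $i$ (a \emph{supporter}), for otherwise the rule would have assigned $v$ a color at most $i$.

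The central observation is that any supporter $u$ of $v$ with $\deg(u)\leq s$ is colored at most $s$. Indeed, the color of $u$ is strictly less than $c$, so $v$ is a neighbor of $u$ that does \emph{not} support $u$, and hence the supporters of $u$ lie in $N(u)\setminus\{v\}$, a set of size at most $\deg(u)-1\leq s-1$. The first-fit rule therefore assigns $u$ a color at most $s$.

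To conclude, I would observe that every color in $\{s+1,\ldots,c-1\}$ appearing among $v$'s supporters must be realized by a neighbor of $v$ of degree strictly greater than $s$. By hypothesis there are at most $t$ such neighbors, so the interval $\{s+1,\ldots,c-1\}$ has size at most $t$, giving $c-1-s\leq t$, i.e.\ $c\leq s+t+1$.

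The only delicate point is the ``$-1$'' in the central observation: overlooking that $v$ is itself a non-supporting neighbor of $u$ would yield only the weaker bound $c\leq s+t+2$. For the same reason I do not see a clean formal reduction to Lemma~\ref{lem:crownSmallDeg}: choosing $H=\{v\}$ or enlarging $H$ by $v$'s high-degree neighbors fails the hypothesis that every vertex of $N(V(H))$ has degree at most~$s$, so a direct supporter-counting argument is what one is really after.
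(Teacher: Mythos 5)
Your proof is correct, and your reading of the hypothesis (at most $t$ neighbors of degree \emph{exceeding} $s$) is the intended one, as confirmed by the single use of the corollary in the proof of \cref{hardness:main} with $s=5$, $t=1$, bound $7$. The paper itself gives no explicit proof, merely asserting that the statement is ``an immediate corollary'' of \cref{lem:crownSmallDeg}; you are also right that no black-box invocation of that lemma works, since any sensible choice of $H$ (e.g.\ $\{v\}$, or $v$ together with its high-degree neighbors) leaves some vertex of unbounded degree in $N(V(H))$. What the paper presumably has in mind is re-running the \emph{argument} of \cref{lem:crownSmallDeg}: pass to a minimal colored witness with $v$ on top, observe that a degree-$\leqslant s$ vertex therein must support a higher-colored vertex and hence has only $s-1$ edges left for its own supporters, so its color is at most $s$. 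Your proof captures the same ``one edge is spent, leaving $\leqslant s-1$ supporters'' mechanism, but instantiates it more directly: a low-degree supporter $u$ of $v$ cannot be supported by $v$ itself (whose color is larger), so $u$'s supporters lie in $N(u)\setminus\{v\}$, giving color $\leqslant s$ without any detour through witness minimality. This is a mild but genuine simplification --- it applies to an arbitrary greedy coloring rather than to a minimal witness --- and, as you note, dropping the ``$-1$'' would only yield the weaker $s+t+2$.
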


\subsection{Partial Grundy and b-Chromatic Core}

A proper coloring of $G$ with $k$ color classes $V_1 \uplus \cdots \uplus V_k$ is a \emph{partial Grundy coloring} of order $k$ if, for each $i \in [2,k]$, there exists $v_i \in V_i$ such that $v_i$ has at least one neighbor in every $V_j$ with $j < i$. 
A proper coloring of $G$ with $k$ color classes $V_1 \uplus \cdots \uplus V_k$ is a \emph{$b$-coloring} of order $k$ if, for each $i \in [k]$, there exists $v_i \in V_i$ such that $v_i$ has at least one neighbor in every $V_j$ with $j \neq i$.
The \emph{$b$-chromatic number}\footnote{It is not difficult to see that there is a $\chi(G)$-coloring of $G$ which is a $b$-coloring as well, where $\chi(G)$ is the chromatic number of $G$.} of $G$ is the maximum $k$ such that $G$ allows a $b$-coloring of order $k$.
A vertex-set $S$ of $G$ is called a \emph{$b$-chromatic core of order $k$} if $G[S]$ admits a $b$-coloring of order $k$.
It is easy to see that admitting a partial Grundy coloring of order $k$ is monotone under taking an induced subgraph. 

\begin{observation}
A graph $G$ admits a partial Grundy coloring of order at least $k$ if and only if there exists a vertex-set $S \subseteq V(G)$ such that $G[S]$ admits a partial Grundy coloring of order $k$.
\end{observation}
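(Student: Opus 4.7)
The plan is to prove both directions by explicit construction, exploiting the fact that the witness condition of a partial Grundy coloring only asks for certain neighbors to be \emph{present} and never forbids extra neighbors from being added to a color class.

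For the forward direction, starting from a partial Grundy coloring $V_1 \uplus \cdots \uplus V_{k'}$ of $G$ of some order $k' \geq k$, I would simply set $S := V_1 \cup \cdots \cup V_k$. The restriction of the coloring to $G[S]$ is proper as a restriction of a proper coloring, and for each $i \in [2,k]$ the original witness $v_i$ lies in $V_i \subseteq S$ and retains all of its witnessing neighbors, since these all lie in $V_1 \cup \cdots \cup V_{i-1} \subseteq S$. So $G[S]$ inherits a partial Grundy coloring of order exactly $k$.

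For the reverse direction, I would take a partial Grundy coloring of $G[S]$ of order $k$ and extend it to all of $G$ by processing the vertices of $V(G)\setminus S$ one at a time, maintaining the invariant that the already-processed induced subgraph carries a partial Grundy coloring of some order $k'' \geq k$. When the next vertex $v$ is handled, there are two cases. If $v$ has a neighbor in each current color class $V_1,\dots,V_{k''}$, assign $v$ the fresh color $k''+1$; then $v$ itself witnesses this new color and the order strictly increases. Otherwise, let $j$ be the smallest index with $V_j \cap N(v)=\emptyset$ and place $v$ into $V_j$; the order is unchanged. In both cases the coloring remains proper, and every previously established witness $v_i$ still sees at least one neighbor in each $V_1,\dots,V_{i-1}$, because inserting vertices into existing classes can only add, never remove, neighbors of $v_i$.

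The argument is essentially bookkeeping, and I do not foresee any genuine obstacle; the only point to check carefully is that dropping $v$ into an existing class can never invalidate a prior witness, which is immediate from the monotone nature of the definition. This is precisely the feature that distinguishes partial Grundy from, say, ordinary proper coloring, where reducing or reshuffling classes may force a recoloring.
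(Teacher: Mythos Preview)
Your proposal is correct and takes essentially the same approach as the paper. The paper's reverse direction also extends the coloring of $G[S]$ greedily, assigning each new vertex $v$ the color $c+1$ where $c$ is the largest index such that $v$ has a neighbor in each of $V_1,\dots,V_c$; this is exactly your rule ``smallest missing class, or a fresh color if none is missing,'' just phrased uniformly. Your forward direction (truncate to the first $k$ classes) spells out what the paper dismisses as ``trivial.''
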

\begin{proof}
  The forward direction is trivial.
  Suppose that $G[S]$ has a partial Grundy coloring $V_1 \uplus \cdots \uplus V_k$ of order $k$.
  For each uncolored vertex $v \in V \setminus S$, let $c \in [k]$ be the highest color such that for every color $i\in [c]$, $v$ has a neighbor colored in $i$. 
  We assign color $c+1$ to $v$.
  It is straightforward to see that the coloring of $S \cup \{v\}$ is a partial Grundy coloring of order at least $k$. 
\end{proof}

Following from this observation, we can formally define \pgr as: 
\defparproblem{\pgr}{An integer $k > 0$, a graph $G$.}{$k$}{Is there a vertex-subset $S\subseteq V(G)$ such that $G[S]$ admits a partial Grundy coloring of order $k$?}

On the other hand, $b$-coloring is not monotone under taking induced subgraphs.
That is, $G$ might contain a $b$-chromatic core of order $k$ but $G$ does not allow a $b$-coloring of order at least $k$.
This leads us to the following monotone problem, which is distinct from deciding whether the $b$-chromatic number of $G$ is at least $k$.

\defparproblem{\bchr}{An integer $k > 0$, a graph $G$.}{$k$}{Is there a vertex-subset $S\subseteq V(G)$ such that $G[S]$ admits a $b$-coloring of order $k$?}

For both \pgr and \bchr, the subgraph of $G$ induced by $S$ is referred to as a \emph{$k$-witness} if $S\subseteq V(G)$ is a solution to the instance $(G,k)$. 
A $k$-witness $H$ is called a \emph{minimal $k$-witness} if $H - v$ is not a $k$-witness for every $v \in V(H)$. 

Let $V_1 \uplus \cdots \uplus V_k$ be a proper coloring of $G$. 
In the context of partial Grundy coloring ($b$-coloring, respectively), 
we say that a vertex $v$ colored $c$ is \emph{supported by $u$} if $uv \in E(G)$ and $u$ is colored $c' < c$ ($c' \neq c$, respectively). 
In the partial Grundy coloring ($b$-coloring, respectively), a vertex $v$ colored $c$ is \emph{supported} if the colors of the supporting vertices of $v$ span all colors from 1 to $c-1$ (all colors of $[k] \setminus c$, respectively). 
Such a vertex $v$ is also called a \emph{center}.
A color $c$ is said \emph{realized} if a vertex $v$ colored $c$ is supported. 
That vertex $v$ is then \emph{realizing} color $c$.
Notice the crucial difference with Grundy colorings that these $c-1$ vertices do not need to be supported themselves.

Note that in a $k$-witness of \pgr or \bchr, each color class contains a supported vertex, which we call a \emph{center}. 
As each center requests at most $k-1$ supporting vertices, 
a minimal $k$-witnesses of \pgr or \bchr has size bounded by $k^2$~\cite{EffantinGT16}.
We note that a leaf of a center may very well be a center itself.
A caricatural example of that is a $k$-clique, where all the vertices are centers and leaves.
Therefore, both \pgr and \bchr can be seen as finding at most~$k^2$ vertices and a proper $k$-coloring of them to realize all $k$ colors.
We denote by $\Gamma'(G)$, respectively $\Gamma_b(G)$, the maximum integer $k$ such that $G$ admits a $k$-witness for \pgr, respectively \bchr. 

\subsection{ETH and handy W[1]-hard problems} \label{sec:W[1]}

The \emph{Exponential Time Hypothesis} (ETH) is a conjecture by Impagliazzo et al.~\cite{ImpagliazzoETH} that asserts that there is no $2^{o(n)}$-time algorithm for \textsc{3-SAT} on $n$-variable instances.
This conjecture implies, for instance, that FPT $\neq$ W[1].
Lokshtanov et al.~\cite{surveyETH} survey conditional lower bounds under the ETH.

We present some classic W[1]-hard problems which will be used as starting points of our reductions.
In the \kmIS problem, one is given a graph $G$, whose vertex-set is partitioned into $k$ sets $V_1, \ldots, V_k$, and is asked if there exists an independent set $I \subseteq V(G)$ such that $\lvert I \cap V_i \rvert =1$ for every $i \in [k]$.
The \kmIS problem is a classic W[1]-hard problem when parameterized by $k$ \cite{DF99}, and unless the ETH fails, cannot be solved in time $f(k){\lvert V(G)\rvert}^{o(k)}$ for any computable function $f$ \cite{Chen06}.
A related problem is \kmSI whose definition goes as follows:

\defparproblem{\kmSI}{An integer $k > 0$, a graph $G$ whose vertex-set is partitioned into $k$ sets $V_1, \ldots, V_k$, and a graph $H$ with $V(H)=[k]$.}{$k$}{Is there $\phi: i \in [k] \mapsto v_i \in V_i$ such that for all $ij \in E(H)$, $\phi(i)\phi(j) \in E(G)$?}

Even when $H$ is a 3-regular graph, \kmSI cannot be solved in time $f(k)|V(G)|^{o(k / \log k)}=f(|E(H)|)|V(G)|^{o(|E(H)| / \log |E(H)|)}$, unless the ETH fails (see Theorem 5.5 in \cite{MarxP15Arxiv}).
One can remove unnecessary edges in $G$ (in those $E(V_i,V_j)$ with $ij \notin E(H)$) such that $\phi([k])$ and $H$ are actually isomorphic in a solution.
\kmSI allows stronger ETH lower bounds than a reduction from \kmIS when the size of the new parameter depends on the number of edges of the sought pattern.

The \textsc{$k$-by-$k$ Grid Tiling}, or simply \textsc{Grid Tiling}, introduced by Marx \cite{Marx06,Marx07}, is a W[1]-hard problem (see for instance \cite{Cygan15}) which is a convenient starting point to show parameterized hardness for geometric problems \cite{Marx06,Feldmann18}.
It may turn out useful in other contexts, such as $H$-free graphs \cite{Bonnet18}, and as we will see in this paper for \bchr.

\defparproblem{\textsc{$k$-by-$k$ Grid Tiling} (or \textsc{Grid Tiling})}{Two integers $k, n > 0$, and a set $\{P_{i,j}\}_{i,j \in [k]}$ of $k^2$ subsets of pairs in $[n] \times [n]$.}{$k$}{Can one select exactly one pair $(x_{i,j},y_{i,j})$ in each $P_{i,j}$ such that for every $i, j \in [k]$, $x_{i,j} = x_{i,j+1}$ and $y_{i,j} = y_{i+1,j}$?}

Observe that the usual (equivalent) definition requires instead $x_{i,j} = x_{i+1,j}$ and $y_{i,j} = y_{i,j+1}$.
We prefer the other formulation since it is closer to the geometric interpretation of packing squares in the plane.

\section{Barriers to the Parameterized Hardness of \gr}\label{sec:barriers}

It is not difficult to see that deciding if a fixed vertex can get color $k$ in a greedy coloring is W[1]-hard.
Let us call this problem \textsc{Rooted Grundy Coloring}.

\begin{observation}\label{obs:rooted}
  \textsc{Rooted Grundy Coloring} is W[1]-hard.
\end{observation}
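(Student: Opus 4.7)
The plan is to reduce from \kmIS, exploiting the fact that in the rooted setting we only need to force a single prescribed vertex $r$ to receive a high color, rather than controlling the Grundy number of the whole graph. This lets us freely hang rich auxiliary structure off the construction, without paying the price (noted in the paper) of accidentally boosting the global Grundy number.

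\noindent\textbf{Construction.} Given $(G,V_1,\ldots,V_k)$ of \kmIS (assume WLOG that each $V_i$ is independent in $G$), set $K=k+1$ and build $H$ as follows. Keep $V(G)$ and the edges of $G$. For each $i\in[k]$ and each $v\in V_i$, attach a private copy of the binomial tree $T_{i-1}$ whose root is adjacent only to $v$, so that $v$ together with this private tree is a copy of $T_i$ and can realise color $i$ on its own. Add color-forcing gadgets (for example, for each $i$ a private clique $C_i$ of size $i-1$, each vertex of $C_i$ carrying its own small binomial-tree support, and made fully adjacent to $V_i$) whose purpose is to guarantee that whenever a vertex of $V_i$ plays the role of a support for $r$, it is colored exactly $i$. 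Finally, add the distinguished vertex $r$ and make it adjacent to every vertex of $V(G)$.

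\noindent\textbf{Forward direction.} Given a multicolored independent set $\{v_1,\ldots,v_k\}$ with $v_i\in V_i$, color each $v_i$ with color $i$ using its private $T_{i-1}$ as support (together with the colors provided by $C_i$); because the $v_i$'s are pairwise non-adjacent in $G$, this is a proper coloring. Extend to a Grundy coloring on all remaining vertices (the "unused" vertices of $V_i$'s and the auxiliary gadgets can be given low colors by first-fit, which, by \cref{lem:crownSmallDeg}, do not reach $k$), and finally color $r$ with $K=k+1$, realised by $v_1,\ldots,v_k$.

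\noindent\textbf{Backward direction and main obstacle.} If $r$ receives color $K$, its neighborhood in $V(G)$ realizes colors $1,\dots,k$; by the color-forcing gadgets, for each $i$ the color-$i$ neighbor of $r$ lies in $V_i$, giving a candidate $\{v_1,\ldots,v_k\}$. The crux is ensuring that this candidate is an independent set in $G$. The main obstacle is therefore the design of an \emph{incompatibility gadget}: for each edge $vu\in E(G)$ with $v\in V_i$, $u\in V_j$, the two private supports of $v$ and $u$ must be coupled (e.g.\ by a shared critical "bottleneck" vertex whose color would have to be two different values simultaneously if $v$ and $u$ were both chosen) so that $v$ and $u$ cannot both reach their forced target colors. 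Proving that such a gadget enforces exactly the IS constraint while not disturbing the color-forcing argument is the technical heart of the observation; it is, however, considerably easier than its unrooted counterpart, because we only have to argue about colorings that give $r$ color $K$, and do not need to simultaneously keep the Grundy number of the whole of $H$ bounded by $K$.
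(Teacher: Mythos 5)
Your high-level idea is right — exploit the fact that the rooted problem does not require bounding the global Grundy number, and reduce from \kmIS — but the proof has a genuine gap, and the construction is in fact more complicated than necessary in a way that creates that gap.

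The gap is exactly where you flag it: the ``incompatibility gadget'' is never constructed, and without it the backward direction does not go through. Your scheme gives the chosen representatives $v_1,\ldots,v_k$ pairwise \emph{distinct} colors $1,\ldots,k$, so properness of the coloring says nothing about edges between $v_i$ and $v_j$; a first-fit run could perfectly well give color $i$ to some $v_i\in V_i$ and color $j$ to an adjacent $v_j\in V_j$, and $r$ would still receive color $k+1$. Nothing in the construction rules this out, and designing a gadget that does so is not obviously easier than the task you are claiming to have solved. A secondary problem: the color-forcing gadget as sketched (a clique $C_i$ of size $i-1$ adjacent to $V_i$) only forbids colors $<i$ on $V_i$; it does not forbid a vertex of $V_j$ with $j<i$ from receiving color $i$, so the claim ``for each $i$ the color-$i$ neighbor of $r$ lies in $V_i$'' is also unsupported.

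The paper sidesteps both issues with a cleaner construction. Instead of attaching $r$ to all of $V(G)$ and trying to force \emph{distinct} colors on the chosen representatives, one adds a clique $C=\{v,v_1,\ldots,v_k\}$ of size $k+1$, a pendant $v'$ on $v$, and makes $v_i$ complete to $V_i$. For $v$ to receive color $k+2$, its $k+1$ neighbors $v',v_1,\ldots,v_k$ must receive the $k+1$ distinct colors $1,\ldots,k+1$; the pendant $v'$ can only get color $1$, so each $v_i$ receives some color $\geqslant 2$ and hence needs a color-$1$ support, which can only lie in $V_i$. All those color-$1$ supports receive the \emph{same} color, so by properness they form an independent set — no incompatibility gadget needed. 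That is the trick your proposal is missing: choose the representatives by color $1$ (where independence is free), not by colors $1,\ldots,k$ (where independence must be enforced by hand).
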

\begin{proof}  
  We design an FPT reduction from \kmIS to \textsc{Rooted Grundy Coloring}.
  Let $H$ be an instance of \skmIS with partition $V_1, \ldots, V_k$.
  We build an equivalent instance $G$ of \textsc{Rooted Grundy Coloring} in the following way.
  We copy $H$ in $G$ and we add a clique $C$ of size $k+1$.
  We call $v$ a fixed vertex of $C$ and we add a pendant neighbor $v'$ to $v$.
  We number the vertices of $C \setminus \{v\}$, $v_1, \ldots, v_k$, and we make $v_i$ adjacent to all the vertices of $V_i$ for each $i \in [k]$.
  A greedy coloring can color $v$ by $k+2$ if and only if there is a $k$-multicolored independent set in $H$.
\end{proof}

Of course this reduction does not imply anything for \gr.
Indeed the vertices of $V(H)$ could get much higher colors than $v$. 
This is precisely the issue with showing the parameterized hardness of \gr.

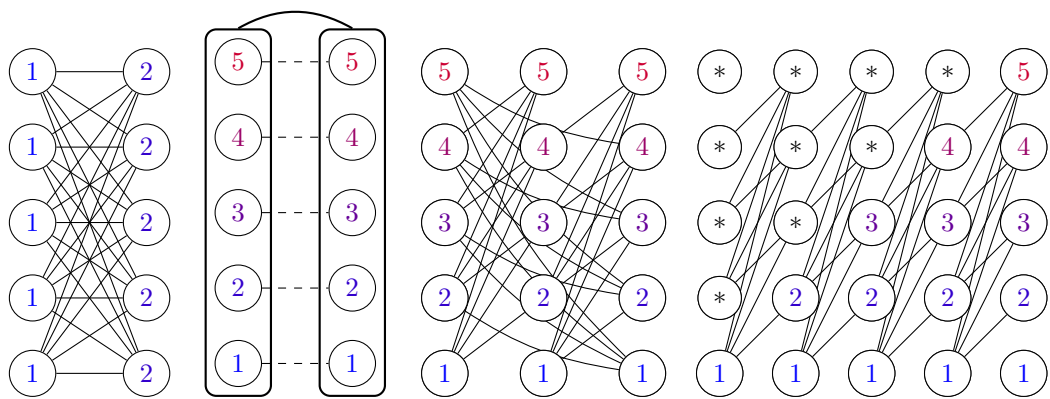
\begin{figure}[h!]
  \centering
  \begin{subfigure}[b]{0.18\textwidth}
    \centering
  \begin{tikzpicture}
    \def\t{5}
    \foreach \i in {1,...,\t}{
      \node[draw,circle] (a\i) at (0,\i) {\textcolor{g1}{$1$}} ;
      \node[draw,circle] (b\i) at (1.5,\i) {\textcolor{g2}{$2$}} ;
    }
    \foreach \i in {1,...,\t}{
      \foreach \j in {1,...,\t}{
        \draw (a\i) -- (b\j) ;
      }
    }
  \end{tikzpicture}
  \caption{Biclique}
  \label{subfig:bicliques}
\end{subfigure}
  \begin{subfigure}[b]{0.19\textwidth}
    \centering
  \begin{tikzpicture}
    \def\t{5}
    \foreach \i in {1,...,\t}{
      \node[draw,circle] (a\i) at (0,\i) {\textcolor{g\i}{$\i$}} ;
      \node[draw,circle] (b\i) at (1.5,\i) {\textcolor{g\i}{$\i$}} ;
      \draw[dashed] (a\i) -- (b\i) ;
    }
    \node[draw,rectangle,rounded corners,thick,fit=(a1)(a\t)] (A) {} ;
    \node[draw,rectangle,rounded corners,thick,fit=(b1)(b\t)] (B) {} ;
    \draw[thick] (A.north) to [bend left=30]  (B.north) ;
   \end{tikzpicture} 
  \caption{Anti-matching}
  \label{subfig:anti-matchings}
  \end{subfigure}
    \begin{subfigure}[b]{0.26\textwidth}
    \centering
    \begin{tikzpicture}
      \def\t{5}
      \def\l{3}
      \def\s{1.3}
      \pgfmathtruncatemacro\lm{\l-1}
      \pgfmathtruncatemacro\tm{\t-1}
      \foreach \j in {1,...,\l}{
        \foreach \i in {1,...,\t}{
          \node[draw,circle,fill=white] (a\j\i) at (\s * \j,\i) {\textcolor{g\i}{$\i$}} ;
        }
      }
      \foreach \j in {1,...,\lm}{
        \pgfmathtruncatemacro\jp{\j+1}
        \foreach \i in {1,...,\tm}{
          \pgfmathtruncatemacro\ip{\i+1}
          \foreach \h in {\ip,...,\t}{
            \draw (a\j\i) -- (a\jp\h) ;
          }
        }
      }
      \foreach \i in {1,...,\tm}{
        \pgfmathtruncatemacro\ip{\i+1}
          \foreach \h in {\ip,...,\t}{
            \draw (a\l\i) to [bend left=12] (a1\h) ;
          }
      }

      \foreach \j in {1,...,\l}{
        \foreach \i in {1,...,\t}{
          \node[draw,circle,fill=white] (a\j\i) at (\s * \j,\i) {\textcolor{g\i}{$\i$}} ;
        }
      }
   \end{tikzpicture}
 \caption{Half-graph short cycle}
 \label{subfig:cycle-HG}
\end{subfigure}
  \begin{subfigure}[b]{0.34\textwidth}
    \centering
    \begin{tikzpicture}
      \def\t{5}
      \def\l{5}
      \pgfmathtruncatemacro\lm{\l-1}
      \pgfmathtruncatemacro\tm{\t-1}
      \foreach \j in {1,...,\l}{
        \foreach \i in {1,...,\j}{
          \node[draw,circle] (a\j\i) at (\j,\i) {\textcolor{g\i}{$\i$}} ;
        }
      }
      \foreach \j in {1,...,\lm}{
        \pgfmathtruncatemacro\jp{\j+1}
        \foreach \i in {\jp,...,\t}{
          \node[draw,circle] (a\j\i) at (\j,\i) {$*$} ;
        }
      }
      \foreach \j in {1,...,\lm}{
        \pgfmathtruncatemacro\jp{\j+1}
        \foreach \i in {1,...,\tm}{
          \pgfmathtruncatemacro\ip{\i+1}
          \foreach \h in {\ip,...,\t}{
            \draw (a\j\i) -- (a\jp\h) ;
          }
        }
      }

        \foreach \j in {1,...,\l}{
        \foreach \i in {1,...,\j}{
          \node[draw,circle,fill=white] (a\j\i) at (\j,\i) {\textcolor{g\i}{$\i$}} ;
        }
      }
      \foreach \j in {1,...,\lm}{
        \pgfmathtruncatemacro\jp{\j+1}
        \foreach \i in {\jp,...,\t}{
          \node[draw,circle,fill=white] (a\j\i) at (\j,\i) {$*$} ;
        }
      }
   \end{tikzpicture} 
  \caption{Half-graph long path}
  \label{subfig:longChain-HG}
\end{subfigure}
\caption{The barriers in a propagation gadget for \gr. The biclique has small Grundy number but do not propagate, nor it imposes a unique choice. The three other propagations have arbitrary large Grundy number, rendering them useless for a parameterized reduction.}
\label{fig:barriers}
\end{figure}

A reduction starting from any W[1]-hard problem has to ``erase'' the potentially large Grundy number of the initial structure.
This can be done by isolating it with low-degree vertices.
However the degree $\Delta$ of the graph should be large, and a large chunk of the instance should have degree unbounded in $k$ since \gr is FPT parameterized by $\Delta+k$~\cite{Sampaio12,BonnetFKS18}.
Besides, as it is the case with W[1]-hardness reductions where induced subgraphs of the initial instance have to be tamed, we crucially need to propagate consistently one choice among a number of alternatives unbounded in the parameter.

A natural idea for encoding one choice among $t \ggg k$ is to have a set $S$ of $t$ vertices, one of which, the \emph{selected} vertex, receiving a specific color, say, 1.
Then a mechanism should ensure that one cannot color 1 two or more vertices of $S$.
Note that we cannot force that property by cliquifying $S$, as this would elevate the Grundy number to at least $t$.
Furthermore, by Ramsey's theorem, there will be independent sets of size $2^{\Omega(\frac{\log{t}}{k})}$ in $S$.
Thus we might as well assume that $S$ is an independent set, and look for another way of preventing two vertices from getting color 1, than by adding edges inside $S$.

We are now facing the following task: Given a bipartite graph, or a ``path'' or ``cycle'' of bipartite graphs whose partite sets are copies of $S$, ensure that exactly one vertex can receive color 1 in each partite set, and that this corresponds to a single vertex in $S$.  
A biclique certainly has low Grundy number (see \cref{subfig:bicliques}) but does not propagate nor it actually forces a unique choice.
Anything more elaborate seems to have large Grundy number, be it the complement of an induced matching, or \emph{anti-matching}, (see \cref{subfig:anti-matchings}), a ``cycle'' of half-graphs (see \cref{subfig:cycle-HG}), or even a long ``path'' of half-graphs (see \cref{subfig:longChain-HG}).
We remind the reader that, as detailed in \cref{sec:prelim}, half-graphs and anti-matchings are (the) two ways of propagating a consistent independent set.

\section{Overcoming the Barriers: Short Path of Half-Graphs}\label{sec:halfgrundy}

It might be guessed from the previous section that the solution will come from a constant-length ``path'' of half-graphs.
It is easy to see that half-graphs (that can be seen as length-one path of half-graphs) have Grundy number at most 3.
Due to the $2K_2$-freeness of the half-graph, there cannot be both color 1 and color 2 vertices present on both sides of the bipartition, say $(A,B)$.
If $A$ is the side missing a 1 or a 2 among its colors, then $B$ in turn cannot have a 3 (nor a 4).
The absence of vertices colored 3 in $B$ prevents vertices colored 4 in $A$.
Overall, no vertex with color 4 can exist. 

It takes more time to realize that a length-two path of half-graphs have constant Grundy number.
We keep this proof to convey a certain intuition behind the boundedness of the Grundy number of such graphs.
However we will then generalize this statement to any constant-length path.
\begin{lemma}\label{lem:chain-of-length-two}
The Grundy number of a length-two path of half-graphs is at most 5.
\end{lemma}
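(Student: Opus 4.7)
The plan is to proceed by contradiction. Assume that the length-two path of half-graphs $H$, with partition $(H_1, H_2, H_3)$, admits a Grundy coloring $V_1 \uplus \cdots \uplus V_6$ of order $6$. The analysis will leverage two features: the $2K_2$-freeness of each of the half-graphs $(H_1, H_2)$ and $(H_2, H_3)$ (the same property that bounds a single half-graph's Grundy number by $3$, as sketched just before the lemma), and the anti-completeness between $H_1$ and $H_3$, which forces every vertex of $H_1 \cup H_3$ to find all its supporters inside $H_2$.

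The crucial step is a localization lemma: no vertex of $H_1 \cup H_3$ can receive a color exceeding $4$. Suppose for contradiction that $v \in H_1$ is colored $5$, with supporters $s_1, s_2, s_3, s_4 \in H_2$ of colors $1, 2, 3, 4$. Each $s_i$ with $i \geq 2$ in turn has a color-$(i-1)$ supporter lying in $H_1 \cup H_3$. I would then deepen the $2K_2$-extraction behind the length-one bound: using the nested neighborhood structure on $(H_1, H_2)$, one pairs a well-chosen supporter of $v$ with a carefully selected second-level supporter that falls back in $H_1$, and shows that the resulting four vertices induce a $2K_2$ inside $(H_1, H_2)$, a contradiction. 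The key point is that this extraction consults only adjacencies visible in $(H_1, H_2)$, so it is insensitive to whether the intermediate supporters happen to be Grundy-justified through $H_3$. By symmetry the same bound holds for $H_3$.

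With the localization in hand, the conclusion is immediate. Let $u$ be the vertex of color $6$. If $u \in H_1 \cup H_3$, then $u$ itself violates the localization, since $6 > 4$. If $u \in H_2$, then $N(u) \subseteq H_1 \cup H_3$ (as $H_2$ is edgeless), and the color-$5$ supporter of $u$ lies in $H_1 \cup H_3$, again contradicting the localization. Either way, $\Gamma(H) \leq 5$.

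The main obstacle is precisely the localization step. The naive route of restricting the Grundy $6$-coloring of $H$ to the half-graph $H[H_1 \cup H_2]$ and invoking the length-one bound fails, because a supporter $s \in H_2$ may be Grundy-justified only by vertices in $H_3$, so the restricted coloring is not itself a Grundy coloring. The remedy is to extract the forbidden $2K_2$ using only the existence of supporters inside $H_1 \cup H_2$ and the nested ordering of the half-graph, bypassing any inductive Grundy bound on the restricted subgraph.
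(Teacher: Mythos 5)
Your proposed localization lemma---that no vertex of $H_1\cup H_3$ receives a color exceeding $4$---is false, and this sinks the whole plan. Consider the following length-two path of half-graphs: $H_1=\{a,u,w\}$, $H_2=\{b_1,b_2,b_3,b_4\}$, $H_3=\{c,c'\}$ with edges $a$ to all of $H_2$, $u$ to $\{b_1,b_3,b_4\}$, $w$ to $\{b_1,b_3\}$, $c$ to $\{b_2\}$, $c'$ to $\{b_1,b_2,b_4\}$. Ordering $H_2$ as $b_2,b_4,b_1,b_3$, the $H_1$-neighbourhoods are nested upsets and the $H_3$-neighbourhoods are nested downsets, so both $(H_1,H_2)$ and $(H_2,H_3)$ are half-graphs on the same $H_2$-order. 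The vertex order $w,b_4,c,b_3,c',b_2,u,b_1,a$ gives a first-fit colouring $1,1,1,2,2,3,3,4,5$, so $a\in H_1$ receives color $5$. Thus localisation is not available.

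Even setting the counterexample aside, your sketch of the localisation step contains a second error: you claim the $2K_2$-extraction ``consults only adjacencies visible in $(H_1,H_2)$'' and so is ``insensitive to whether the intermediate supporters happen to be Grundy-justified through $H_3$.'' But when $H_1$ is small (in the extreme, $H_1=\{v\}$) \emph{every} second-level supporter of $s_2,s_3,s_4$ lies in $H_3$, and $H_1\cup H_2$ simply does not contain two vertex-disjoint edges, let alone a $2K_2$. Whatever contradiction exists in that regime must necessarily involve the half-graph $(H_2,H_3)$. This is precisely why the paper's proof does not try to confine high colors to $H_2$: instead it observes that any color $\geqslant 6$ forces all of $1,2,3,4$ into $H_2$, fixes (WLOG) a color-$2$ vertex of $H_2$ supported from $H_1$, and then pushes a chain of $2K_2$-based exclusions that alternate between $(H_1,H_2)$ and $(H_2,H_3)$---no $2$ in $H_1$, hence the $2$'s supporting $3$'s of $H_2$ come from $H_3$, hence no $3$ in $H_3$, hence no $4$ in $H_1\cup H_3$, hence no $5$ or $6$ in $H_2$, hence no $6$ anywhere. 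Both half-graphs are essential to that ladder.
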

\begin{proof}
  We denote by $A$, $B$, and $C$ the tripartition of the vertex-set of $H$, a path of half-graphs of length two.
  A first observation is that for a color $i \geqslant 2$ to appear somewhere in $A \cup C$, there should be all the colors from 1 to $i-1$ already present in $B$.
  In particular for the Grundy number to exceed 5, one needs to have all the colors up to 4 present in $B$.
  Without loss of generality, we assume that one vertex $v \in B$ colored 2 is supported by a vertex $u \in A$ colored 1.

  We claim that there cannot be a vertex colored $2$ in $A$.
  Indeed this vertex would need to be supported by a vertex colored $1$ in $B$, thereby creating with $uv$ an induced $2K_2$ in the half-graph $H[A \cup B]$.
  This absence implies that every vertex $x \in B$ colored $3$ has its supporting vertex colored $2$, say $y$, in $C$.
  In turn it implies that there is no vertex colored 3 in $C$, since this vertex and its supporting $2$ in $B$ would form with $xy$ an induced $2K_2$ in the half-graph $H[B \cup C]$.
  Now a vertex colored $4$ in $B$ has to be supported by a $3$ in $A$, and by a $2$ in $C$.
  Similarly, these two facts prevent the existence of $4$ in $A$, and in $C$.
  In that case, there cannot be a $5$ nor a $6$ in $B$.
  The absence of a $5$ in $B$ makes it impossible to have a $6$ in $A \cup C$.
  Overall there cannot be a $6$ appearing in $H$.
\end{proof}

We could stop here and make a W[1]-hardness using path of half-graphs of length one or two.
For the sake of curiosity, but also in order to present a more transparent construction using length-four path of half-graphs, we show that, in all generality, the Grundy number of a length-$\ell$ path of half-graphs is a ``constant'' depending only on $\ell$.

\begin{lemma}\label{lem:chain-of-length-l}
  The Grundy number of a length-$\ell$ path of half-graphs is at most $4^{\ell}$.
\end{lemma}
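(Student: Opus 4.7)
The plan is to prove the statement by induction on $\ell$, following the same spirit as \cref{lem:chain-of-length-two} but in a way that composes well. The base case $\ell = 1$ is the three-line analysis that precedes \cref{lem:chain-of-length-two}: in a single half-graph, $2K_2$-freeness prevents a vertex of color $4$, so the Grundy number is at most $3 \leq 4$. The case $\ell = 2$ is \cref{lem:chain-of-length-two} itself ($5 \leq 4^2$).

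For the inductive step, I would take a putative Grundy coloring of a length-$\ell$ path $H_1 \uplus H_2 \uplus \cdots \uplus H_{\ell+1}$ whose largest color $c$ exceeds $4^\ell$, pick the unique vertex $v$ receiving color $c$, and write $v \in H_i$. Since $N(v) \subseteq H_{i-1} \cup H_{i+1}$, all $c-1$ lower colors are realized inside the two adjacent parts. By pigeonhole, one of the two sides, say $H_{i+1}$, already displays at least $(c-1)/2 > 2\cdot 4^{\ell-1}$ distinct colors inside $N(v)$. My aim is then to exhibit a Grundy coloring of order $>4^{\ell-1}$ on the sub-path $H_{i+1} \uplus \cdots \uplus H_{\ell+1}$ (of length $\ell - i \leq \ell - 1$), which would contradict the inductive bound. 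The factor of $4$ (rather than $2$) in the statement is what gives me room to absorb both sides' contributions as well as the colors used within $H_i$ itself via one more pigeonhole split.

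The crucial step is converting the restricted coloring of the sub-path, which may fail to be a Grundy coloring because a vertex of $H_{i+1}$ could be supported by a vertex of $H_i$ that has been removed, into a bona fide Grundy coloring on this sub-path. Here I would mimic the alternating-colors argument of \cref{lem:chain-of-length-two}: using $2K_2$-freeness of the half-graph $(H_i, H_{i+1})$, if two vertices $u \in H_i$ and $w \in H_{i+1}$ are colored $1$ and $2$ respectively and support each other's neighborhoods, then large classes of ``parallel'' colorings are ruled out on the other side, forcing supports to migrate deeper into $H_{i+2}$. Iterating this propagation along the sub-path, along with the binomial-tree rigidity afforded by \cref{lem:subtree} applied to the minimal witness rooted at $v$, lets me relocate each lost support into the sub-path $H_{i+1} \uplus \cdots \uplus H_{\ell+1}$ while losing at most a constant multiplicative factor in the number of colors.

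The main obstacle will be precisely this migration argument: \emph{removing the part $H_i$ destroys supports, and one must systematically replace them inside the shorter sub-path without collapsing the color hierarchy}. This is where the $2K_2$-free (half-graph) structure, exactly as in the length-two proof, is indispensable; if the graphs between consecutive parts were arbitrary bipartite, the restricted coloring would be useless. Once this rerouting is in place, the induction closes cleanly with $c \leq 4 \cdot 4^{\ell-1} = 4^\ell$, and the slack in the constant $4$ absorbs both the pigeonhole step and the bookkeeping involved in transferring supports.
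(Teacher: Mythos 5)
Your overall framework — induction on $\ell$, peeling off one part of the path, and invoking $2K_2$-freeness of the connecting half-graph to ``rescue'' lost supports — is in the same spirit as the paper's proof, but the crucial step is left unresolved and the tactic you sketch for it does not work as stated.

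The gap is precisely what you flag as ``the crucial step'': converting the restricted coloring of a sub-path into a genuine Grundy coloring after a part has been deleted. You propose to do this by a ``migration'' argument that iteratively pushes supports from $H_i$ deeper into $H_{i+2}, H_{i+3}, \dots$, bolstered by ``binomial-tree rigidity'' via \cref{lem:subtree}. Neither of these is sound here. \cref{lem:subtree} is specific to binomial trees embedded as induced subgraphs with controlled external attachment; a minimal witness inside a path of half-graphs is not a binomial tree, so the lemma gives you nothing in this setting. And the migration idea never says \emph{which} colors you keep: you look at the colors appearing in $N(v)\cap H_{i+1}$, but these are chosen locally around the top-colored vertex $v$, and there is no reason the corresponding color classes, restricted to $H_{i+1}\uplus\cdots\uplus H_{\ell+1}$, should again be a witness. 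A vertex of $H_{i+1}$ may have \emph{all} of its low-colored supports in $H_i$, and the $2K_2$-freeness argument you gesture at does not by itself manufacture replacements inside the sub-path.

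The paper closes this gap with a different and sharper choice of color classes. Rather than looking near the maximal vertex, it first shows by induction (and \cref{obs:subset}) that at most $4^{\ell-1}$ colors can be missing from $V_0$ and at most $4^{\ell-1}$ from $V_\ell$; hence at least $\Gamma(G)-2\cdot 4^{\ell-1}$ colors appear in \emph{both} extremities $V_0$ and $V_\ell$. It then restricts to exactly those color classes and deletes $V_0$. The point of choosing colors that also appear in $V_0$ is that if a vertex $x\in V_1$ colored $i$ loses a support $y\in V_0$ colored $j$, one can use the fact that color $i$ also occurs on a vertex $y'\in V_0$, whose support of color $j$ must lie in $V_1$, say at $x'$; then $\{x,y,x',y'\}$ would form an induced $2K_2$ in the half-graph $(V_0,V_1)$, a contradiction. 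This shows the restriction is still a witness of $G-V_0$, which is a length-$(\ell-1)$ path, and the induction closes. Your pigeonhole over $N(v)$ cannot substitute for this because it does not give you the pairing ``same color present at both ends'' that makes the $2K_2$ argument fire. Until that replacement mechanism is made precise, the induction does not close.
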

\begin{proof}
  Achieving a (more) reasonable upper bound --the Grundy number of such graphs is most likely polynomial or even linear in $\ell$-- proves to be not so easy.
  We choose here to give a short proof of an admittingly bad upper bound.

  We show this bound by induction on $\ell$.
  Note that the statement trivially holds for $\ell=0$, and that we previously verified it for $\ell=1$.
  Assume that the Grundy number of any length-$(\ell-1)$ path of half-graphs is at most $4^{\ell-1}$, for any $\ell \geqslant 2$. 
  
  Let $G$ be a length-$\ell$ path of half-graphs, with partition $V(G) = V_0 \uplus V_1 \uplus \dots \uplus V_{\ell}$ where $G[V_i \cup V_{i+1}]$ is a half-graph for each $i \in [\ell-1]$.
  Observe that $G - V_0$ and $G - V_{\ell}$ are both length-$(\ell-1)$ path of half-graphs.
  Let $H$ be a colored witness of $G$ achieving color $\Gamma(G)$.
  We distinguish some cases based on the number of colors of $H$ appearing in $V_0$ or in $V_\ell$.
  In each case, we conclude with \cref{obs:subset}.
  No more than $4^{\ell-1}$ colors of $H$ can be missing in $V_0$ (resp.~in $V_{\ell}$).
  Otherwise by \cref{obs:subset}, the corresponding color classes form a $k$-witness $G-V_0$ (resp.~in $G-V_{\ell}$) with some $k > 4^{\ell-1}$, contradicting the induction hypothesis.

  So we may assume that at least $\Gamma(G)-4^{\ell-1}$ colors appear in $V_0$ (resp.~in $V_\ell$).
  Thus at least $(2\Gamma(G)-2 \cdot 4^{\ell-1})-\Gamma(G)=\Gamma(G) - 2 \cdot 4^{\ell-1}$ colors appears in both $V_0$ and $V_{\ell}$.
  If $\Gamma(G) > 4^\ell$, then $\Gamma(G) - 2 \cdot 4^{\ell-1} > 4^{\ell-1}$.
  We further claim that the corresponding color classes would form a witness in $G-V_0$, a contradiction. 
  If not, it must be because a vertex $x \in V_1$ colored $i$ was adjacent to a vertex $y \in V_0$ colored $j < i$, and is not adjacent to any vertex colored $j$ in $G-V_0$.
  But we know that $V_0$ contains a vertex $y'$ colored $i$, which in turn must be adjacent to a vertex $x' \in V_1$ colored $j$, forming an induced $2K_2$ in $G[V_0 \cup V_1]$, a contradiction.
  Therefore, $\Gamma(G) \leqslant 4^\ell$.

We observe that our proof works for a more general notion of ``path of half-graphs'' where one does not impose the orders of the successive half-graphs to have the same orientation (see the second paragraph of \cref{subsec:halfgraph}).
\end{proof}

Combining the ideas in \cref{lem:chain-of-length-two,lem:chain-of-length-l}, we obtain the following corollary, with a better bound than the straightforward application of \cref{lem:chain-of-length-l}.
\begin{corollary}\label{lem:chain-of-length-four}
  The Grundy number of a length-four path of half-graphs is at most 53.
\end{corollary}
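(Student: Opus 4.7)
The plan is to combine the sharp base case of \cref{lem:chain-of-length-two} with the inductive peeling argument developed in the proof of \cref{lem:chain-of-length-l}. Recall that the loose bound $\Gamma(H) \leqslant 4^\ell$ in \cref{lem:chain-of-length-l} comes from applying the induction with itself as the inductive hypothesis; the actual recursive inequality its proof establishes is stronger, namely $\Gamma(H) \leqslant 3 f(\ell-1)$ whenever $f(\ell-1)$ is any valid upper bound on the Grundy number of length-$(\ell-1)$ paths of half-graphs. So the first step is to re-read that proof with $f(\cdot)$ standing for the best bound available at each level, rather than the crude quantity $4^\ell$.

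Concretely, I would let $H$ be a length-four path of half-graphs with partition $(V_0, V_1, V_2, V_3, V_4)$ and apply the Lemma~\ref{lem:chain-of-length-l} argument once with $\ell = 3$ and once with $\ell = 4$. For $\ell = 3$, use the base case $f(2) \leqslant 5$ from \cref{lem:chain-of-length-two} in place of $4^2 = 16$: at most $5$ colors can be missing from $V_0$ (resp.\ $V_3$), else the remaining color classes would form a length-two witness exceeding bound $5$; so at least $\Gamma - 5$ colors sit in each of $V_0, V_3$, hence at least $\Gamma - 10$ colors sit in both, and the corresponding classes form a length-two witness, forcing $\Gamma - 10 \leqslant 5$, i.e.\ $f(3) \leqslant 15$. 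Repeating the same argument at $\ell = 4$ with $f(3) \leqslant 15$ yields $f(4) \leqslant 3 \cdot 15 = 45 \leqslant 53$. (If one prefers an argument more closely modeled on \cref{lem:chain-of-length-two}, peel $V_0$ and $V_4$ simultaneously and account for the two end layers separately with an additive term from the $f(2)$ bound, which also comes in under $53$.)

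The technical step that requires care, and that I expect to be the only real obstacle, is re-verifying that the ``colors present in both $V_0$ and $V_\ell$'' really do induce a valid witness inside $G - V_0$ (the property that drives the recursion). This is where the $2K_2$-freeness of a single half-graph is invoked: if a vertex $x \in V_1$ colored $i$ in the original witness loses a supporting color $j < i$ once $V_0$ is deleted, then both $V_0$ and $V_1$ would contain vertices colored $i$ and $j$ in a configuration that embeds an induced $2K_2$ in $H[V_0 \cup V_1]$, contradicting the half-graph assumption. Carrying out this verification carefully, exactly as in the final paragraph of the proof of \cref{lem:chain-of-length-l}, completes the argument and yields the $53$ upper bound.
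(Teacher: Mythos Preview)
Your approach is essentially the same as the paper's: both combine the sharp base case $f(2)\le 5$ from \cref{lem:chain-of-length-two} with the peeling recursion extracted from the proof of \cref{lem:chain-of-length-l}. The paper's one-line proof just records the arithmetic $53 = 3\times(3\times 5 + 2) + 2$, i.e.\ it uses the recursion $f(\ell)\le 3f(\ell-1)+2$, whereas the argument you spell out actually gives the slightly stronger $f(\ell)\le 3f(\ell-1)$ and hence $f(4)\le 45$; either way the stated bound of $53$ follows.
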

\begin{proof}
  53 being $3 \times (3 \times 5 + 2) + 2$.
\end{proof}

We are now ready to present the hardness construction.
We reduce from \kmSI whose definition can be found in the preliminaries.

\begin{theorem}\label{hardness:main}
  \gr is W[1]-complete and, unless the ETH fails, cannot be solved in time $f(q)n^{o(2^{q-\log q})}$ (nor in time $f(q)n^{2^{o(q)}}$) for any computable function $f$, on $n$-vertex graphs with Grundy number~$q$.
\end{theorem}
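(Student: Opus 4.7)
The plan is to FPT-reduce from \kmSI on a $3$-regular pattern graph $H$, which by the Marx--Pilipczuk bound recalled in \cref{sec:W[1]} cannot be solved in time $f(k)n^{o(k/\log k)}$ under the ETH. I will build an equivalent \gr instance $G'$ on $N = \mathrm{poly}(k,n)$ vertices with target Grundy number $q = \Theta(\log k)$, so that $2^{q-\log q}=2^q/q=\Theta(k/\log k)$ and hence $f(q)N^{o(2^{q-\log q})} \leqslant f'(k)n^{o(k/\log k)}$ after absorbing polynomial factors of $k$ into $f'$. Since $2^{o(q)} = k^{o(1)} = o(k/\log k)$ when $q = \Theta(\log k)$, the same reduction rules out $f(q)n^{2^{o(q)}}$-time algorithms.

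The backbone of $G'$ is a single binomial tree $T_q$ rooted at a target vertex $r$, chosen so that I may force $r$ to receive color $q$. By \cref{lem:subtree}, reaching color $q$ at $r$ is equivalent to coloring a designated set $X$ of ``anchors''---roots of $T_i$-subtrees obtained after snipping off their $T_{i-1}$-children, at a depth chosen so that $|X| = \Theta(k)$ hosts one slot per vertex of $H$ plus one per edge of $H$---with the prescribed color $i-1$ from outside $T_q$. For each vertex $\alpha \in V(H)$ I attach to a vertex-anchor $x_\alpha$ a \emph{selection gadget}: an independent set $S_\alpha = \{s_\alpha^1,\dots,s_\alpha^n\}$ representing the vertices of $V_\alpha$, linked to $x_\alpha$ through a length-four path of half-graphs engineered so that, in any Grundy coloring realizing color $i-1$ at $x_\alpha$, exactly one $s_\alpha^u$ plays the ``selected'' role. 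For each edge $\alpha\beta \in E(H)$ I attach an edge-anchor $x_{\alpha\beta}$ with its own length-four half-graph path between $S_\alpha$ and $S_\beta$, retaining only those crossing edges that correspond to edges of the input graph $G$ of \kmSI, so that the anchor fires only when the simultaneously selected $s_\alpha^u$ and $s_\beta^v$ satisfy $uv \in E(G)$.

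Correctness splits into two directions. For the forward direction, a \kmSI solution $\phi$ yields a consistent family of selections $(s_\alpha^{\phi(\alpha)})_\alpha$ and, using (ii)$\Rightarrow$(i) of \cref{lem:subtree}, these lift through each length-four half-graph path to a Grundy coloring of $G'$ that reaches color $q$ at $r$. For the backward direction, (i)$\Rightarrow$(ii) of \cref{lem:subtree} forces every anchor to be colored with its prescribed color $i-1$, which forces a unique selected vertex in each $S_\alpha$, and the edge-check gadgets then certify that $\phi$ is a \kmSI solution. To argue $\Gamma(G') \leqslant q$, I apply \cref{lem:chain-of-length-four} internally to each gadget (bounding its internal Grundy number by $53$) together with \cref{lem:crownSmallDeg,cor:fewHighDegNeighbors} at the tree--gadget interface, so that no vertex outside $T_q$ can receive a color exceeding a constant, which is absorbed into $q = \Theta(\log k)$. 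Membership in W[1] follows from the bounded-size witness characterization of \cref{sec:prelim}, by a standard guess-and-verify argument in the $W[1]$ machine model.

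The main obstacle is the tight interplay at each gadget interface: the sets $S_\alpha$ must be large and of sufficient degree into the half-graph path for a single selected vertex to \emph{uniquely} propagate color $i-1$ down to $x_\alpha$, yet the attachment to $T_q$ must simultaneously preserve the length-four path-of-half-graphs structure (so that \cref{lem:chain-of-length-four} still governs the Grundy number inside the gadget) and the uniqueness clause (i)$\Rightarrow$(ii) of \cref{lem:subtree} (so that no ``shortcut'' Grundy coloring reaches $r$ without going through the intended selection mechanism). Concretely this amounts to designing the four intermediate layers of each half-graph path---their orderings, their incidences with $x_\alpha$, and the cross-edges connecting $S_\alpha$ to $S_\beta$---so that firing two vertices of $S_\alpha$ simultaneously creates an induced $2K_2$ in one of the layers, thereby disabling the propagation; and so that the only way for a high color to reach $x_\alpha$ from outside $T_q$ is via a single $s_\alpha^u$ carrying a consistent choice. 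Once these local properties are secured, the overall reduction is FPT with the stated parameter blow-up, and the ETH lower bound follows from the Marx--Pilipczuk bound on $3$-regular \kmSI.
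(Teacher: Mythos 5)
Your high-level plan matches the paper's: reduce from \kmSI with $3$-regular patterns, set $q=\Theta(\log k)$ so that the brute-force bound $2^{q-\log q}$ translates into $\Theta(k/\log k)$, use a single $T_q$ backbone controlled by \cref{lem:subtree}, isolate the selection machinery via \cref{lem:crownSmallDeg,cor:fewHighDegNeighbors}, and bound the Grundy number of the gadgets via \cref{lem:chain-of-length-four}. The bookkeeping on the lower bound ($N^{o(k/\log k)} \leqslant f'(k)n^{o(k/\log k)}$ and $2^{o(q)}=k^{o(1)}=o(k/\log k)$) is also correct. However, the gadget layout you describe has a concrete flaw that would prevent the reduction from going through as stated.

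First, your edge-consistency mechanism breaks the very lemma it relies on. You propose a length-four half-graph path ``between $S_\alpha$ and $S_\beta$'' in which you ``retain only those crossing edges that correspond to edges of the input graph $G$.'' A half-graph's edge set is determined by a linear order on each side (nested neighborhoods, no induced $2K_2$); deleting an arbitrary subset of cross edges to encode $E(G)$ destroys the half-graph structure, so \cref{lem:chain-of-length-four} no longer bounds the gadget's Grundy number, and nothing caps the high colors that can escape from it. The paper avoids this by never placing a half-graph directly between the encodings of $V_\alpha$ and $V_\beta$. Instead it builds, \emph{per color class} $V_i$, a single length-four path $L_i \to V_{i,i(1)} \to V_{i,i(2)} \to V_{i,i(3)} \to R_i$ whose middle three layers are indexed by edges of $G$ incident to $V_i$ and whose half-graph orders depend \emph{only} on the $V_i$-endpoint (hence the gadget genuinely is a half-graph path); consistency \emph{within} $V_i$ is then an independent-set condition on five vertices of that one gadget, which forces them all to name the same $u\in V_i$.

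Second, and relatedly, you omit the intermediate layer of $T_5$ trees, which is precisely where the ``AND'' happens. In the paper, $\beta(u),\gamma(u)$ of a vertex-tree $T_5(u)$ attach to $l(u)$ and $r(u)$, so the root of $T_5(u)$ can reach color $5$ only if \emph{both} ends of $V_i$'s half-graph path name $u$; similarly $\beta(uv),\gamma(uv)$ of an edge-tree $T_5(uv)$ attach to $z(u,v)\in V_{i,j}$ and $z(v,u)\in V_{j,i}$, so $T_5(uv)$ can fire only if the choices in $V_i$ and $V_j$ jointly certify the edge $uv\in E(G)$. Those roots $\mathcal R_i,\mathcal R_{i,j}$ feed the anchors $f(i),f(ij)$ colored $6$ inside $T_q$. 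Without this conjunction layer there is no way for \cref{lem:subtree} to propagate the constraint ``both endpoints agree'' up into $T_q$, and edge-checking would have to live inside the half-graph itself --- which, as above, it cannot. So while you have identified all the right lemmas and the right macro-architecture, the gadget design you sketch would not yield a graph of bounded Grundy number, and the edge-check has no sound implementation in your layout.
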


\begin{proof}
  The membership to W[1] is given by the framework of Cesati~\cite{Cesati03}, since there is always a witness of size $2^{q-1}$.
  We show the W[1]-hardness of \gr by reducing from \kmSI with 3-regular pattern graphs.
  Let $(G=(V_1,\ldots,V_k,E),H=([k],F))$ be an instance of that problem. 
  We further assume that $k$ is a positive even integer and there is no edge between $V_i$ and $V_j$ in $G$ whenever $ij \notin E(H)$.
  The goal is now to find $v_1 \in V_1, \ldots, v_k \in V_k$ such that $H$ is isomorphic to  $G[\{v_1, \ldots, v_k\}]$.
  Even with these restrictions \kmSI cannot be solved in time $f(k)|V(G)|^{o(k / \log k)}=f(|E(H)|)|V(G)|^{o(|E(H)| / \log |E(H)|)}$, unless the ETH fails (see \cite{Marx10,MarxP15}, and Theorem 5.5 in \cite{MarxP15Arxiv}).
  
  We build an equivalent \gr-instance $(G',q)$ with $q = \lceil \log k \rceil + 55$ as follows.
  For each color class $V_i$, we fix an arbitrary total ordering $\leqslant_i$ on the vertices of $V_i$, and we write $u <_i u'$ if $u \neq u'$ and $u \leqslant_i u'$. 
  Let $i \in [k]$ and let $i(1), i(2), i(3) \in [k]$ be the three neighbors of $i$ in $H$.
  Each $V_i$ is encoded by a length-$4$ path of half-graphs denoted by $H_i$ (see~\cref{fig:encoding-Vi}). 
  We now detail the construction of $H_i$. 
  
  We set $V(H_i) := L_i \cup V_{i,i(1)} \cup V_{i,i(2)} \cup V_{i,i(3)} \cup R_i$. 
  The vertices of $L_i$ (resp.~$R_i$) are in one-to-one correspondence with the vertices of $V_i$.
  We denote by $l(u)$ (resp.~$r(u)$) the vertex of $L_i$ (resp.~$R_i$) corresponding to $u \in V_i$.
  For each $p \in [3]$, the vertices of  $V_{i,i(p)}$ are in one-to-one correspondence with the edges of $E(V_i,V_{i(p)})$.
  We denote by $z(u,v)$ the vertex of $V_{i,i(p)}$ corresponding to the edge $uv \in E(V_i,V_{i(p)})$ with $u \in V_i$ and $v \in V_{i(p)}$.
  
We set $E(H_i) := E(L_i, V_{i,i(1)}) \cup E(V_{i,i(1)}, V_{i,i(2)}) \cup E(V_{i,i(2)}, V_{i,i(3)}) \cup E(V_{i,i(3)}, R_i)$:
\begin{itemize}
\item $l(u)z(u',v) \in E(L_i, V_{i,i(1)})$ \hfill{if and only if $u <_iu'$}
\item for $p \in [2]$, $z(u,v)z(u',v') \in E(V_{i,i(p)}, V_{i,i(p+1)})$ \hfill{if and only if $u <_i u'$}
\item $z(u,v)r(u') \in E(V_{i,i(3)}, R_i)$ \hfill{if and only if $u <_i u'$.}
\end{itemize}
For each pair of vertices $u,u' \in V_i$ such that $u <_i u'$, we add an edge between $l(u)$ and $z(u',v) \in V_{i,i(1)}$, respectively $z(u,v) \in V_{i,i(1)}$ and $z(u',v') \in V_{i,i(2)}$, respectively $z(u,v) \in V_{i,i(2)}$ and $z(u',v') \in V_{i,i(3)}$, respectively $z(u,v) \in V_{i,i(3)}$ and $r(u')$ (see \cref{fig:encoding-Vi}). 

  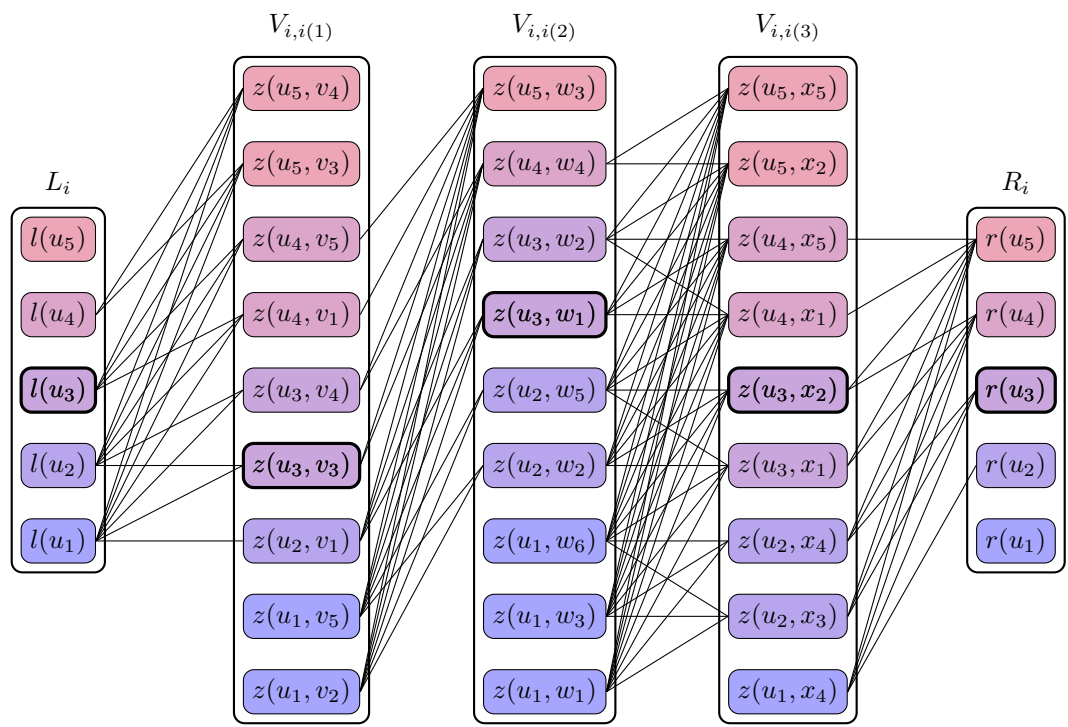
\begin{figure}
    \centering
    \begin{tikzpicture}
      \def\t{5}
      \foreach \i in {1,...,\t}{
        \node[rounded corners,fill opacity=0.35,fill=g\i,text opacity=1,draw] (l\i) at (0,\i) {$l(u_\i)$} ;
        \node[rounded corners,fill opacity=0.35,fill=g\i,text opacity=1,draw] (r\i) at (12.6,\i) {$r(u_\i)$} ;
      }
      \node[draw, thick, rectangle, rounded corners, fit=(l1) (l\t)] (l) {} ;
      \node[draw, thick, rectangle, rounded corners, fit=(r1) (r\t)] (r) {} ;
      \node at (0,5.75) {$L_i$} ;
      \node at (12.6,5.75) {$R_i$} ;
      \def\a{3.2}
      \foreach \i/\j/\x/\y in {
        1/-1/1/v_2, 1/0/1/v_5, 1/1/2/v_1, 1/2/3/v_3, 1/3/3/v_4, 1/4/4/v_1, 1/5/4/v_5, 1/6/5/v_3, 1/7/5/v_4,
        2/-1/1/w_1, 2/0/1/w_3, 2/1/1/w_6, 2/2/2/w_2, 2/3/2/w_5, 2/4/3/w_1, 2/5/3/w_2, 2/6/4/w_4, 2/7/5/w_3,
        3/-1/1/x_4, 3/0/2/x_3, 3/1/2/x_4, 3/2/3/x_1, 3/3/3/x_2, 3/4/4/x_1, 3/5/4/x_5, 3/6/5/x_2, 3/7/5/x_5
      }{
        \pgfmathtruncatemacro\jo{\j + 2}
        \node[rounded corners,fill opacity=0.35,fill=g\x,text opacity=1,draw] (z\i\jo) at (\i * \a,\j) {$z(u_\x,\y)$} ;
      }
      \foreach \i in {1,2,3}{
        \node[draw, thick, rectangle, rounded corners, fit=(z\i1) (z\i9)] (vi\i) {} ;
      }
      \node at (\a,7.8) {$V_{i,i(1)}$} ;
      \node at (2*\a,7.8) {$V_{i,i(2)}$} ;
      \node at (3*\a,7.8) {$V_{i,i(3)}$} ;
      \foreach \i/\j in {1/3,2/4,3/6,4/8}{
        \foreach \h in {\j,...,9}{
          \draw (l\i.east) -- (z1\h.west) ;
        }
      }
      \foreach \i/\j in {1/4,2/4,3/6,4/8,5/8,6/9,7/9}{
        \foreach \h in {\j,...,9}{
          \draw (z1\i.east) -- (z2\h.west) ;
        }
      }
      \foreach \i/\j in {1/2,2/2,3/2,4/4,5/4,6/6,7/6,8/8}{
        \foreach \h in {\j,...,9}{
          \draw (z2\i.east) -- (z3\h.west) ;
        }
      }
      \foreach \i/\j in {1/2,2/3,3/3,4/4,5/4,6/5,7/5}{
        \foreach \h in {\j,...,\t}{
          \draw (z3\i.east) -- (r\h.west) ;
        }
      }
      \node[very thick,rounded corners,draw] at (0,3) {$l(u_3)$} ;
      \node[very thick,rounded corners,draw] at (12.6,3) {$r(u_3)$} ;
      \node[very thick,rounded corners,draw] at (\a,2) {$z(u_3,v_3)$} ;
      \node[very thick,rounded corners,draw] at (2*\a,4) {$z(u_3,w_1)$} ;
      \node[very thick,rounded corners,draw] at (3*\a,3) {$z(u_3,x_2)$} ;
    \end{tikzpicture}
    \caption{The encoding $H_i$ of one $V_i$ ordered $u_1 < u_2 < u_3 < u_4 < u_5$. In bold, a possible independent set intersecting the five sets and containing a consistent pair $l(u), r(u)$.}
    \label{fig:encoding-Vi}
  \end{figure}
  
  For each $ij \in E(H)$, we create $|E(V_i,V_j)|$ copies of the binomial tree $T_5$.
  So these trees are in one-to-one correspondence with the edges of $G$ between $V_i$ and $V_j$, and we denote by $T_5(uv)$ the tree corresponding to $uv \in E(V_i,V_j)$.
  We denote by $\beta(uv)$ and $\gamma(uv)$ the two children getting color 2 of the only two vertices colored 3, in the Grundy coloring of $T_5(uv)$ which gives color 5 to its root.
  We remove the pendant neighbor of $\beta(uv)$ and of $\gamma(uv)$ (the two vertices getting color 1 and supporting $\beta(uv)$ and $\gamma(uv)$).
  This results in a fourteen-vertex tree.
  We denote this set of trees by $\mathcal T_{i,j}$, and the $|E(V_i,V_j)|$ roots of the $T_5$ by $\mathcal R_{i,j}$.
  For each $ij \in E(H)$ and for every pair $z(u,v) \in V_{i,j}, \, z(v,u) \in V_{j,i}$, we make $z(u,v)$ and $\beta(uv)$ adjacent, and we make $z(v,u)$ and $\gamma(uv)$ adjacent.

  For every $i \in [k]$, we create $|V_i|$ copies of the binomial tree $T_5$.
  These trees are in one-to-one correspondence with $V_i$.
  Similarly as above, we denote by $\beta(u)$ and $\gamma(u)$ the two vertices getting color 2, whose parents are colored 3, in $T_5(u)$ and we remove their pendant neighbor (colored 1).
  For every pair $l(u) \in L_i$ and $r(u) \in R_i$, we link $l(u)$ and $\beta(u)$, and we link $r(u)$ and $\gamma(u)$.
  We denote this set of trees by $\mathcal T_i$, and the $|V_i|$ roots of the $T_5$ by $\mathcal R_i$.

  We finally create one copy of the binomial tree $T_q$.
  We observe that there are $|E(H)|$ sets $\mathcal R_{i,j}$ and $|V(H)|$ sets $\mathcal R_i$.
  The binomial tree $T_q$ has at least $|V(H)|+|E(H)|=2.5k$ vertices getting color $7$ in the greedy coloring giving color $q$ to the root.
  Indeed the number of vertices colored 7 is $2^{q-8}$, and it holds that $q - 8 \geqslant \log k + \log{2.5}$.
  We map $2.5k$ distinct vertices colored 6 in $T_q$, that are children of vertices colored 7, in a one-to-one correspondence with $V(H) \cup E(H)$.
  Let $f(i)$ be the vertex mapped to $i \in V(H)$ and $f(ij)$ be the vertex mapped to $ij \in E(H)$.
  We further remove the subtree $T_5$ of each of these $2.5k$ vertices colored 6. 
  For every $i \in V(H)$, we link $f(i)$ to all the vertices in $\mathcal R_i$.
  Similarly for every $ij \in E(H)$, we link $f(ij)$ to all the vertices in $\mathcal R_{i,j}$.
  This finishes the construction of the graph $G'$.
  Solving \gr in time $f(q)n^{o(2^{q - \log q})} = f(\lceil \log k \rceil+55)n^{o(k / \log k)}$ would give the same running time for \kmSI, which is ruled out under the ETH. 
 We now prove that the reduction is correct. 
  \paragraph*{A solution to \kmSI implies $\mathbf{\Gamma(G') \geqslant q}$.}
  Let $v_1 \in V_1, v_2 \in V_2, \ldots, v_k \in V_k$ be a fixed solution to the \kmSI-instance (the colored isomorphism being $i \in [k] \mapsto v_i$).
  We say that each edge $v_iv_j$ is \emph{in the solution} (for $i \neq j \in [k]$).
  We color 1 all the vertices of $G'$ corresponding to edges in the solution, that is, all the vertices $z(v_i,v_j)$, as well as all vertices of $G'$ corresponding to vertices in the solution, that is $l(v_i)$ and $r(v_i)$.
  This is possible since the five vertices $l(v_i), z(v_i,v_{i(1)})$, $z(v_i,v_{i(2)})$, $z(v_i,v_{i(3)}), r(v_i)$ form an independent set since $\neg(v_i <_i v_i)$.

  We can now color 2 the vertices $\beta(v_i)$ and $\gamma(v_i)$.
  Therefore the root of $T_5(v_i)$ can receive color 5.
  Moreover, for every $ij \in E(H)$ we can color 2 the vertices $\beta(v_iv_j)$ and $\gamma(v_iv_j)$.
  Therefore the root of $T_5(v_iv_j)$ can receive color 5.
  Since one vertex in each $\mathcal R_i$, and one vertex in each $\mathcal R_{i,j}$ get color 5, the vertices $f(i)$ and $f(ij)$ can all get color 6.
  Finally the root of $T_q$ can receive color $q$.    
  \paragraph*{$\mathbf{\Gamma(G') \geqslant q}$ implies a solution to \kmSI.}
  We first show that only the two vertices of $T_q$ with degree $q-1$ can get color $q$.
  Besides these two vertices, the only vertices of $T_q$ with sufficiently large degree to get color $q$ are the vertices $f(i)$ and $f(ij)$.
  But these vertices have at most one neighbor of degree more than $5$.
  So according to \cref{cor:fewHighDegNeighbors}, they cannot receive a color higher than $7 < q$.
  Now we use \cref{lem:chain-of-length-four} to bound the color reachable outside of $T_q$.
  For every $i \in [k]$, the induced subgraph $G'[H_i]$ is a length-four path of half-graphs.
  Thus by \cref{lem:twin,lem:chain-of-length-four}, $\Gamma(G'[H_i]) \leqslant 53$.
  All the vertices in the open neighborhood of $V_{i,i(1)} \cup V_{i,i(2)} \cup V_{i,i(3)}$ have degree at most 2.
  So by \cref{lem:crownSmallDeg} vertices outside $T_q$ cannot receive a color beyond $55 < q$.

  We now established that if $\Gamma(G') \geqslant q$ (actually $\Gamma(G') = q$), then either one of the two possible roots of $T_q$ shall receive color $q$.
  By \cref{lem:subtree}, this implies that all the vertices $f(i)$ and $f(ij)$ receive color 6, and that in each $\mathcal R_i$ and each $\mathcal R_{i,j}$ there is at least one vertex receiving color 5.
  For every $i \in [k]$, let $T_5(u_i)$ be one $T_5$ of $\mathcal T_i$ whose root gets color 5.
  We will now show that $\{u_1, \ldots, u_i, \ldots, u_k\}$ is a solution to the \kmSI-instance.
  Again by \cref{lem:subtree}, this is only possible if $\beta(u_i)$ and $\gamma(u_i)$ both get color 2, and their unique neighbor outside $T_5(u_i)$ gets color 1.
  It means that $l(u_i)$ and $r(u_i)$ both get color 1.

  Since every $\mathcal R_{i,j}$ contains at least one vertex colored 5, \cref{lem:subtree} implies that every $V_{i,i(p)}$ (for each $p \in [3]$) gets at least one vertex colored 1.
  Let $z(u,v) \in V_{i,i(1)}$, $z(u',v') \in V_{i,i(2)}$, and $z(u'',v'') \in V_{i,i(3)}$ three vertices getting color 1.
  As $\{l(u_i), z(u,v), z(u',v'), z(u'',v''), r(u_i)\}$ should be an independent set, we have $u_i \geqslant_i u \geqslant_i u' \geqslant_i u'' \geqslant_i u_i$.
  This implies that $u_i = u = u' = u''$.
  In turn that implies that no vertex $z(u^*,v) \in V_{i,i(1)} \cup V_{i,i(2)} \cup V_{i,i(3)}$ with $u^* \neq u_i$ can get color 1.
  Indeed $l(u_i)$ prevents a 1 ``above'' $z(u_i,v) \in V_{i,i(1)}$ and $z(u_i,v') \in V_{i,i(2)}$ prevents a 1 ``below'' $z(u_i,v)$.
  The same goes for the color classes $V_{i,i(2)}$ and $V_{i,i(3)}$.
  Thus the only trees $T_5(uv) \in \mathcal T_{i,j}$ that can get color 5 at their root are the ones such that $\{u,v\} \subset \{u_1, \ldots, u_k\}$.
  As all the $1.5k$ sets $\mathcal T_{i,j}$ have such a tree, it implies that $\{u_1, \ldots, u_k\}$ is a solution to the \kmSI-instance. 
\end{proof}

\section{Parameterized hardness of \bchr}\label{sec:core}

A length-two path of half-graphs have arbitrary large $b$-chromatic core.
Indeed the coloring of \cref{subfig:cycle-HG} yields a $b$-chromatic core achieving the height of the half-graphs number of colors (even without the edges of the ``cycle'').
Nevertheless a simple half-graph only admits $b$-chromatic cores of bounded size.
We show how to still build a W[1]-hardness construction in this furtherly constrained situation. 

\begin{theorem}\label{hardness:bchr}
\bchr is W[1]-complete.
\end{theorem}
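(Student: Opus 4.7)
The plan is to establish membership in W[1] and then W[1]-hardness. Membership follows from the bound of $k^2$ on the size of a minimal $k$-witness for \bchr noted in the preliminaries: one can nondeterministically guess such a subset $S \subseteq V(G)$ using $O(k^2 \log n)$ bits and verify in polynomial time that $G[S]$ admits a $b$-coloring of order $k$, placing the problem in W[1] via Cesati's framework (the same route as used for \gr).

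For hardness, my plan is to reduce from \textsc{$k$-by-$k$ Grid Tiling}, producing a graph whose $b$-chromatic core of order $q = \Theta(k^2)$ encodes a valid tiling. The construction will rely on three kinds of pieces, all built from simple half-graphs only (since length-two paths of half-graphs already admit arbitrarily large $b$-chromatic cores, as noted just before the theorem): first, a \emph{selection gadget} for each cell $(i,j)$ containing one vertex $v_{i,j}(x,y)$ per pair $(x,y) \in P_{i,j}$ together with a dedicated cell-center $c_{i,j}$ adjacent to all of them and assigned a color unique to that cell; second, between each pair of horizontally adjacent cells $(i,j)$ and $(i,j+1)$ a pair of oppositely-oriented \emph{half-graph connectors} whose orderings follow $[n]$, so that the $2K_2$-freeness forces the two chosen pairs to share their $x$-coordinate, together with analogous vertical connectors for the $y$-coordinate; and third, a global \emph{color envelope} ensuring that every one of the $q$ colors must be realized by a vertex that sits inside some cell gadget.

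Correctness will run in both directions. From a valid tiling, the natural selection of one $v_{i,j}(x,y)$ per cell, combined with the cell centers and the appropriate portion of the envelope, yields a $b$-coloring of order $q$: each cell center receives neighbors of every other color thanks to the envelope together with the connectors to adjacent cells. Conversely, from a $b$-chromatic core of order $q$, the envelope forces exactly one selection vertex to be picked per cell and forces each $c_{i,j}$ to realize its dedicated color; the half-graph connectors then propagate coordinate agreement across the grid to yield a valid tiling.

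The principal obstacle will be the severely restricted propagation toolkit. Because cascades of half-graphs are unavailable, all coordinate consistency must be encoded by simple half-graphs placed directly between spatially adjacent cell gadgets, and the envelope must be engineered so that no color can be realized outside the intended cell centers. This is delicate because the $b$-coloring requirement is symmetric---every center needs a neighbor of every other color, not merely every lower one---so a spurious vertex with rich neighborhood could easily usurp a color and break the reduction. Designing the envelope tightly enough to pin down each color to a unique cell center, and then verifying that no alternative subset of vertices can realize the full color spectrum, will form the combinatorial core of the hardness proof.
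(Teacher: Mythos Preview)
Your high-level plan matches the paper's: membership via Cesati and the $k^2$ witness bound, hardness via \textsc{Grid Tiling} with $q=\Theta(k^2)$, and single half-graphs between adjacent cells as the only propagation device. But the proposal stops exactly where the real work begins, and the one concrete design choice you make is problematic.

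First, your ``cell-center $c_{i,j}$ adjacent to all $v_{i,j}(x,y)$'' does not encode a selection. If $c_{i,j}$ sees every pair in $P_{i,j}$, then making $c_{i,j}$ a supported center puts no constraint on which $(x,y)$ is chosen; any neighbor can supply any missing color. The paper instead makes the \emph{selection vertex itself} the center: each $a_{i,j}(x,y)$ is joined to a private biclique side $B_{i,j}$ of size $q-9$, to five vertices of a global clique, and to exactly one vertex in each of the four adjacent half-graphs, so its degree is essentially $q$ and being a center forces a specific coloring of all its neighbors. That tightness is what pins the choice down.

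Second, your ``pair of oppositely-oriented half-graph connectors'' to get $x\le x'$ and $x'\le x$ directly is dangerous: two half-graphs on the same bipartition already admit large $b$-chromatic cores, which you yourself flagged as the obstacle. The paper uses a \emph{single} half-graph per adjacency and closes the grid into a torus, so the cycle of inequalities $x_{i,1}\le x_{i,2}\le\cdots\le x_{i,k}\le x_{i,1}$ collapses to equalities.

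Third, the ``color envelope'' is the whole proof and you have not built it. The paper's envelope is a clique $C$ of size $q-k^2$ with $k^2$ pendants per vertex (so one center in $C$ forces all of $C$ to be colored, leaving exactly $k^2$ colors to be realized elsewhere), together with three $5$-subsets $C',C^-,C^+\subset C$ fully wired to the $A_{i,j}$'s and to each side of every half-graph, which kills centers in the $B_{i,j}$'s and in the half-graphs by denying them those five colors. A further $18$-subset $D\subset C$ with a mod-$3$ wiring scheme arranges that each center $a_{i,j}(x,y)$ misses exactly four colors of $[18]$, one recoverable only from each of the four neighboring half-graphs. None of this is guessable from your outline.

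Finally, you need a counting argument to show no alternative set of $q$ centers exists. The paper proves two ``almost-twin'' lemmas: in any set whose vertices share a common neighborhood up to $p$ private neighbors (or are one side of a half-graph with $p$ extra neighbors), at most $p+1$ of them can be centers with distinct colors. These bounds, summed over all $A_{i,j}$, $B_{i,j}$, and half-graph sides, fall short of $q$, forcing a center into $C\setminus C_d$ and triggering the cascade above. Your proposal contains no analogue of this step.
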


\begin{proof}
  The inclusion in W[1] is immediate by the characterization of Cesati~\cite{Cesati03}, and the facts that minimal witnesses have size at most $k^2$, and that given the subgraph induced by a minimal witness one can check if it is  solution. 
  To show W[1]-hardness, we reduce from \textsc{$k$-by-$k$ Grid Tiling}. 
  We recall that, in this problem, given $k^2$ sets of pairs over $[n]$, say, $(P_{i,j} \subseteq [n] \times [n])_{i,j \in [k] \times [k]}$, "displayed in a $k$-by-$k$ grid", one has to find one pair $(x_{i,j},y_{i,j})$ in each $P_{i,j}$ such that $x_{i,j} = x_{i,j+1}$ and $y_{i,j} = y_{i+1,j}$, for every $i, j \in [k-1]$.
  This problem remains W[1]-hard under these assumptions.

  \paragraph*{Construction.}
  Let $(P_{i,j} \subseteq [n] \times [n])_{i,j \in [k] \times [k]}$ be the instance of \textsc{Grid Tiling}.
 For each $(i,j)$, we have the set of pairs $P_{i,j}$ with $\lvert P_{i,j} \rvert = t$.
  For each $(i,j)$, we add a biclique $K_{t,q-9}(i,j) := K_{t,q-9}$, where $q := 14k^2$.
  The  part of $K_{t,q-9}(i,j)$ with size $t$ is denoted by $A_{i,j}$ and the other part by $B_{i,j}$ (see \cref{fig:bicliqueAndHG}). 
  We denote by $A_{i,j}$ the $t$ vertices to the left of $K_{t,q-9}(i,j)$ on \cref{fig:bicliqueAndHG}, and by $B_{i,j}$, the $q-9$ vertices to the right.
  The vertices of $A_{i,j}$ are in one-to-one correspondence with the pairs of $P_{i,j}$.
  We denote by $a_{i,j}(x,y) \in A_{i,j}$ the vertex corresponding to $(x,y) \in P_{i,j}$.
  We make the construction ``cyclic'', or rather ``toroidal''.
  So in what follows, every occurrence of $i+1$ or $j+1$ should be interpreted as 1 in case $i=k$ or $j=k$.
  
  For every vertically (resp.~horizontally) consecutive pairs $(i,j)$ and $(i+1,j)$ (resp.~$(i,j)$ and $(i,j+1)$) we add a half-graph $H(i \rightarrow i+1,j)$ (resp.~$H(i,j \rightarrow j+1)$) with bipartition $H(\underline{i} \rightarrow i+1,j) \cup H(i \rightarrow \underline{i+1},j)$ (resp.~$H(i,\underline{j} \rightarrow j+1) \cup H(i,j \rightarrow \underline{j+1})$).
  Both sets $H(\underline{i} \rightarrow i+1,j)$ and $H(i,\underline{j} \rightarrow j+1)$ are in one-to-one correspondence with the vertices of $A_{i,j}$, while the set $H(i \rightarrow \underline{i+1},j)$ is in one-to-one correspondence with the vertices of $A_{i+1,j}$, and $H(i,j \rightarrow \underline{j+1})$, with the vertices of $A_{i,j+1}$.
  We denote by $h_{\underline{i} \rightarrow i+1,j}(x,y)$ (resp.~$h_{i \rightarrow \underline{i+1},j}(x',y')$) the vertex corresponding to $a_{i,j}(x,y)$ (resp.~$a_{i+1,j}(x',y')$).
  Similarly we denote by $h_{i,\underline{j} \rightarrow j+1}(x,y)$ (resp.~$h_{i,j \rightarrow \underline{j+1}}(x',y')$) the vertex corresponding to $a_{i,j}(x,y)$ (resp.~$a_{i,j+1}(x',y')$).
  Every vertex in a half-graph $H(i \rightarrow i+1,j)$ or $H(i,j \rightarrow j+1)$ is made adjacent to its corresponding vertex in $A_{i,j} \cup A_{i+1,j} \cup A_{i,j+1}$.
  Thus $a_{i,j}(x,y)$ is linked to $h_{\underline{i} \rightarrow i+1,j}(x,y)$, $h_{i-1 \rightarrow \underline{i},j}(x,y)$, $h_{i,\underline{j} \rightarrow j+1}(x,y)$, and $h_{i,j-1 \rightarrow \underline{j}}(x,y)$.
  Note that underlined numbers are used to distinguish names, and to give information on its neighborhood.
  We call \emph{vertical} half-graph an $H(i \rightarrow i+1,j)$, and \emph{horizontal} half-graph an $H(i,j \rightarrow j+1)$.
  We now precise the order of the half-graphs.
  In vertical half-graphs, we put an edge between $h_{\underline{i} \rightarrow i+1,j}(x,y)$ and $h_{i \rightarrow \underline{i+1},j}(x',y')$ whenever $y < y'$.
  In horizontal half-graphs, we put an edge between $h_{i, \underline{j} \rightarrow j+1}(x,y)$ and $h_{i,j \rightarrow \underline{j+1}}(x',y')$ whenever $x < x'$.

  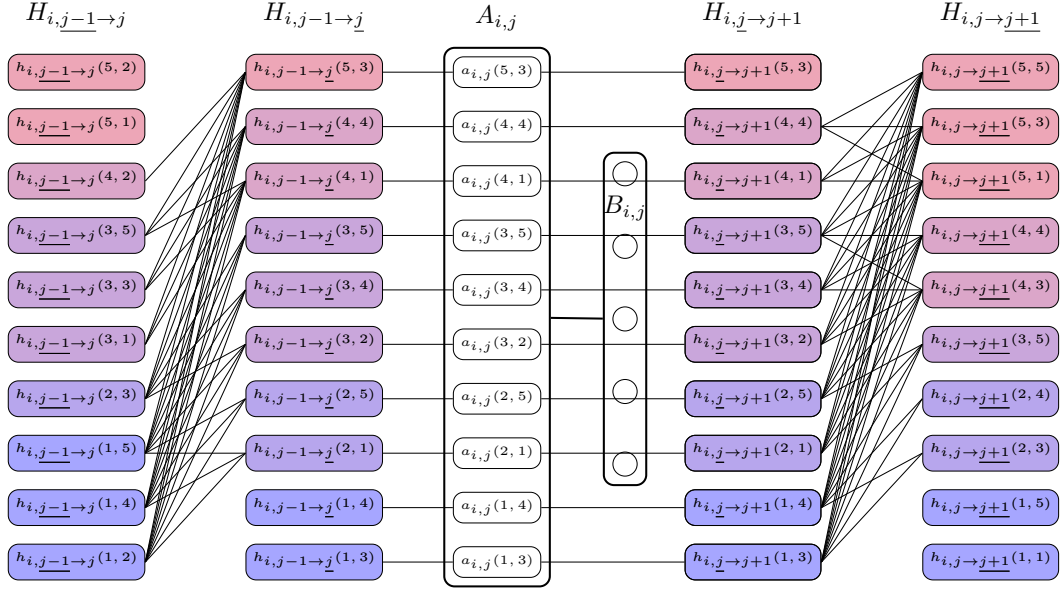
\begin{figure}[h!]
    \centering
    \resizebox{400pt}{!}{
    \begin{tikzpicture}
      \def\t{10}
      \def\q{5}
      \def\n{5}
      \def\vs{0.75}
      \foreach \i/\x/\y in {1/1/3,2/1/4,3/2/1,4/2/5,5/3/2,6/3/4,7/3/5,8/4/1,9/4/4,10/5/3}{
        \node[draw,rectangle,rounded corners,inner sep=0.1cm] (a\i) at (0.75, \i * \vs) {\tiny{$a_{i,j}(\x,\y)$}} ;
        \node[fill opacity=0.35,fill=g\x,rectangle,rounded corners,inner sep=0.1cm] at (-1.75, \i * \vs) {\tiny{$h_{i,j-1 \rightarrow \underline{j}}(\x,\y)$}} ;
        \node[draw,rectangle,rounded corners,inner sep=0.1cm] (hp1\i) at (-1.75, \i * \vs) {\tiny{$h_{i,j-1 \rightarrow \underline{j}}(\x,\y)$}} ;
        \node[fill opacity=0.35,fill=g\x,draw,rectangle,rounded corners,inner sep=0.1cm] at (4.25, \i * \vs) {\tiny{$h_{i,\underline{j} \rightarrow j+1}(\x,\y)$}} ;
        \node[draw,rectangle,rounded corners,inner sep=0.1cm] (h2\i) at (4.25, \i * \vs) {\tiny{$h_{i,\underline{j} \rightarrow j+1}(\x,\y)$}} ;
        \draw (hp1\i.east) -- (a\i.west) ;
        \draw (a\i.east) -- (h2\i.west) ;
      }
      \node at (0.75, \t * \vs + \vs) {$A_{i,j}$} ;
      \node at (-1.75, \t * \vs + \vs) {$H_{i,j-1 \rightarrow \underline{j}}$} ;
      \node at (4.25, \t * \vs + \vs) {$H_{i,\underline{j} \rightarrow j+1}$} ;
      \foreach \i in {1,...,\q}{
        \node[draw,circle] (b\i) at (2.5,\i + 1.1) {} ;
      }
      \node at (2.5, 7.5 * \vs) {$B_{i,j}$} ;
      \node[draw,rectangle,thick,rounded corners,fit=(a1) (a\t)] (aij) {} ;
      \node[draw,rectangle,thick,rounded corners,fit=(b1) (b\q)] (bij) {} ;
      \draw[thick] (aij) -- (bij) ;
      \foreach \i/\x/\y in {1/1/2,2/1/4,3/1/5,4/2/3,5/3/1,6/3/3,7/3/5,8/4/2,9/5/1,10/5/2}{
        \node[fill opacity=0.35,fill=g\x,rectangle,rounded corners,inner sep=0.1cm] at (-5, \i * \vs) {\tiny{$h_{i,\underline{j-1} \rightarrow j}(\x,\y)$}} ;
        \node[draw,rectangle,rounded corners,inner sep=0.1cm] (h1\i) at (-5, \i * \vs) {\tiny{$h_{i,\underline{j-1} \rightarrow j}(\x,\y)$}} ;
      }
      \foreach \i/\x/\y in {1/1/1,2/1/5,3/2/3,4/2/4,5/3/5,6/4/3,7/4/4,8/5/1,9/5/3,10/5/5}{
        \node[fill opacity=0.35,fill=g\x,rectangle,rounded corners,inner sep=0.1cm] at (7.5, \i * \vs) {\tiny{$h_{i, j \rightarrow \underline{j+1}}(\x,\y)$}} ;
        \node[draw,rectangle,rounded corners,inner sep=0.1cm] (hp2\i) at (7.5, \i * \vs) {\tiny{$h_{i, j \rightarrow \underline{j+1}}(\x,\y)$}} ;
      }
      \node at (-5, \t * \vs + \vs) {$H_{i,\underline{j-1} \rightarrow j}$} ;
      \node at (7.5, \t * \vs + \vs) {$H_{i,j \rightarrow \underline{j+1}}$} ;
      \foreach \i/\j in {1/3,2/3,3/3,4/5,5/8,6/8,7/8,8/10}{
        \foreach \h in {\j,...,\t}{
          \draw (h1\i.east) -- (hp1\h.west) ;
        }
      }
      \foreach \i/\j in {1/3,2/3,3/5,4/5,5/6,6/6,7/6,8/8,9/8}{
        \foreach \h in {\j,...,\t}{
          \draw (h2\i.east) -- (hp2\h.west) ;
        }
      }
    \end{tikzpicture}
    }
    \caption{The biclique $K_{t,q-9}(i,j)$ encoding the pairs $P_{i,j}$, and its connection to the two neighboring horizontal half-graphs, with $n=5$, $t=10$, and $q=14$.}
    \label{fig:bicliqueAndHG}
  \end{figure}
  
  We then add a global clique $C$ of size $q-k^2$.
  We attach $k^2$ private neighbors to each vertex of $C$.
  Among the $q-k^2$ vertices of $C$, we arbitrarily distinguish 33 vertices: a set $D=\{d_1, \ldots, d_{18}\}$ of size 18, and three sets $C', C^-, C^+$ each of size 5. 
  We fully link $d_z$ to every $B_{i,j}$ if $z$ takes one of the following values: 
\begin{itemize}
  \item $3(j \text{ mod } 3 - 1)+ i \text{ mod } 3,$
 \item $\text{succ}(3(j \text{ mod } 3 - 1)+ i \text{ mod } 3),$
  \item $3(i \text{ mod } 3 - 1) + j \text{ mod } 3 + 9,$
  \item $\text{succ}(3(i \text{ mod } 3 - 1) + j \text{ mod } 3 + 9),$
\end{itemize}   
where the modulos are always taken in $\{0,1,2\}$, and $\text{succ}(x) := x+1$ if $x$ is not dividable by $3$ and $\text{succ}(x) := x-2$ otherwise (see \cref{fig:forbiddenColors}).
Note that each $B_{i,j}$ is linked with $d_z$ for two successive (indicated by the operator $\text{succ}(x)$) integers $z$ in the range of $[1,3]$, $[4,6]$ or $[7,9]$  
depending on the  coordinate $j$ modulo $3$. Likewise, each $B_{i,j}$ is linked with $d_z$ for two successive integers $z$ in the range of $[10,12]$, $[13,15]$ or $[17,18]$  
depending on the  coordinate $i$ modulo $3$.

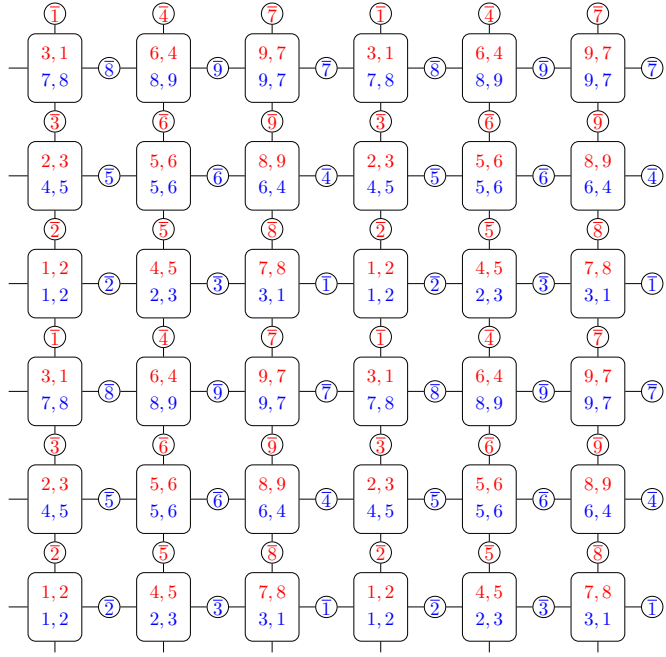
\begin{figure}[h!]
    \centering
    \resizebox{250pt}{!}{
    \begin{tikzpicture}
      \def\s{2}
      \foreach \g/\h in {0/1,1/2,2/3,0/4,1/5,2/6}{
        \foreach \p in {0,1}{
          \pgfmathtruncatemacro\gi{1+3*\g}
          \pgfmathtruncatemacro\gj{2+3*\g}
          \pgfmathtruncatemacro\gk{3+3*\g}
          \pgfmathtruncatemacro\pi{1+3*\p}
          \pgfmathtruncatemacro\pj{2+3*\p}
          \pgfmathtruncatemacro\pk{3+3*\p}
          \pgfmathtruncatemacro\xp{(1+3*\p) * \s}
          \node[blue] (c\h\pi) at (\xp,\h * \s) {$\gi, \gj$} ;
          \node[blue] (c\h\pj) at (\xp + \s,\h * \s) {$\gj, \gk$} ;
          \node[blue] (c\h\pk) at (\xp + 2 * \s,\h * \s) {$\gk, \gi$} ;
 
          \pgfmathsetmacro\yp{(1+3*\p) * \s + \s / 4}
          \node[red] (d\h\pi) at (\h * \s, \yp) {$\gi, \gj$} ;
          \node[red] (d\h\pj) at (\h * \s, \yp + \s) {$\gj, \gk$} ;
          \node[red] (d\h\pk) at (\h * \s, \yp + 2 * \s) {$\gk, \gi$} ;

          \node[draw,circle,inner sep=0.02cm] (f\h\pi) at (\xp + \s / 2,\h * \s + \s / 8) {\textcolor{blue}{$\overline{\gj}$}} ;
          \node[draw,circle,inner sep=0.02cm] (f\h\pj) at (\xp + \s + \s / 2,\h * \s + \s / 8) {\textcolor{blue}{$\overline{\gk}$}} ;
          \node[draw,circle,inner sep=0.02cm] (f\h\pk) at (\xp + 2 * \s + \s / 2,\h * \s + \s / 8) {\textcolor{blue}{$\overline{\gi}$}} ;

          \node[draw,circle,inner sep=0.02cm] (g\h\pi) at (\h * \s, \yp + 0.38 * \s) {\textcolor{red}{$\overline{\gj}$}} ;
          \node[draw,circle,inner sep=0.02cm] (g\h\pj) at (\h * \s, \yp + \s + 0.38 *\s) {\textcolor{red}{$\overline{\gk}$}} ;
          \node[draw,circle,inner sep=0.02cm] (g\h\pk) at (\h * \s, \yp + 2 * \s + 0.38 *\s) {\textcolor{red}{$\overline{\gi}$}} ;
        }
      }
      \foreach \h in {1,...,6}{
        \foreach \y in {1,...,6}{
          \node[draw,rectangle,rounded corners,fit=(c\h\y) (d\y\h)] (e\h\y) {} ;
        }
      }
      \foreach \h in {1,...,6}{
        \foreach \y in {1,...,6}{
          \draw (e\h\y) -- (f\h\y) ;
          \draw (e\h\y) -- (g\y\h) ;
        }
      }
      \foreach \h in {1,...,5}{
        \foreach \y in {1,...,6}{
          \pgfmathtruncatemacro{\hp}{\h+1}
          \draw (e\hp\y) -- (g\y\h) ;
        }
      }
      \foreach \h in {1,...,6}{
        \draw (e\h1.west) --++(-0.35,0) ;
        \draw (e1\h.south) --++(0,-0.2) ;
        \foreach \y in {1,...,5}{
          \pgfmathtruncatemacro{\yp}{\y+1}
          \draw (e\h\yp) -- (f\h\y) ;
        }
      }
    \end{tikzpicture}
    }
    \caption{The rounded rectangles represent the $B_{i,j}$, and the numbers therein, the $z \in [18]$ such that $d_z$ is fully linked to it. These are the ``colors'' that a center in $A_{i,j}$ will have to fight for. The circles represent the half-graphs, and the number therein, the only $z$ such that $d_z$ is \emph{not} linked to it. Red integers are offset by 9 ($\textcolor{red}{1}=\textcolor{blue}{10}$, $\textcolor{red}{2}=\textcolor{blue}{11}$, and so on). The edges represent the non-empty interaction between the $A_{i,j}$ and the half-graphs. The structure is glued like a torus.}
      \label{fig:forbiddenColors}
  \end{figure}

  We observe that $B_{i,j}$ and $B_{i+1,j}$ are linked to exactly one common $d_z$ ($z \in [18]$), and we fully link the half-graph $H(i \rightarrow i+1,j)$ to $D \setminus \{z\}$.
  Similarly we fully link the half-graph $H(i,j \rightarrow j+1)$ to $D \setminus \{z\}$ where $z$ is the unique integer of $[18]$ such that $d_z$ is fully linked to both $B_{i,j}$ and $B_{i,j+1}$.
  We fully link each $A_{i,j}$ to $C'$, each $H(\underline{i} \rightarrow i+1,j)$ and $H(i,\underline{j} \rightarrow j+1)$ to $C^-$, and each $H(i \rightarrow \underline{i+1},j)$ and $H(i,j \rightarrow \underline{j+1})$ to $C^+$.
  This is just to prevent that one uses a vertex of a $B_{i,j}$ or of a half-graph as a center.
  As we will see, intended solutions have all their centers in $C \cup \bigcup_{i,j \in [k]} A_{i,j}$.
  
  This ends our polytime construction.
  We denote by $G$ the obtained graph.
  We ask for a $b$-chromatic core achieving color $q$.
  This polytime construction is of polynomial size in the \textsc{$k$-by-$k$ Grid Tiling}-instance, and $q$ is a computable function of $k$.
  We thus only need to show that the reduction is correct.

    \paragraph*{A solution to \textsc{$k$-by-$k$ Grid Tiling} implies $\mathbf{\Gamma_b(G) \geqslant q}$.}
  We color the clique $C$ with all the colors in $[q-k^2]$, and the $k^2$ private neighbors of each vertex of $C$ with the corresponding remaining colors in $[q-k^2+1,q]$.
  The coloring of $D$ uses all the colors in $[18]$ and respects \cref{fig:forbiddenColors}.
  The colors in $[q-k^2]$ are already \emph{realized}.
  The remaining $k^2$ colors in $[q-k^2+1,q]$ will be realized by the $k^2$ centers corresponding to the solution of the \textsc{Grid Tiling}-instance.
  Let $\{p_{i,j}=(x_{i,j},y_{i,j}) \in P_{i,j}\}_{i,j \in [k]}$ be a grid-tiling solution. 
  We arbitrarily color the $a_{i,j}(x_{i,j},y_{i,j})$ with all the colors of $[q-k^2+1,q]$. 
  The intended $q$-witness shall take the vertices $a_{i,j}(x_{i,j},y_{i,j})$ as the centers of the top $k^2$ colors. 
  Note that each $a_{i,j}(x_{i,j},y_{i,j})$ has precisely $q$ neighbors: four vertices 
  corresponding to $a_{i,j}(x_{i,j},y_{i,j})$ in each of the four half-graphs, $q-9$ vertices in $B_{i,j}$ and five vertices in $C'$. 
  Therefore, in order for $a_{i,j}(x_{i,j},y_{i,j})$ to become a center, we color the set $B_{i,j}$ with 
  the $q-9$ colors out of $[q]$ which are the only available colors after excluding: 
  the five colors of $C'$ that are already seen by $A_{i,j}$, and the four colors of $[18]$ already seen by $B_{i,j}$.  
  
  The four colors left to be seen by $a_{i,j}(x_{i,j},y_{i,j})$ shall be provided by the four half-graphs.
  Let $z \in [18]$ be the common missing color of $a_{i,j}(x_{i,j},y_{i,j})$ and $a_{i+1,j}(x_{i+1,j},y_{i+1,j})$ (resp.~ $a_{i,j+1}(x_{i,j},y_{i,j+1})$): this is the color adjacent to both $B_{i,j}$ and $B_{i+1,j}$ (resp.~$B_{i,j}$ and $B_{i,j+1}$).
  We color by $z$ the neighbor of $a_{i,j}(x_{i,j},y_{i,j})$ in $H(\underline{i} \rightarrow i+1,j)$ and the neighbor of $a_{i+1,j}(x_{i+1,j},y_{i+1,j})$ in $H(i \rightarrow \underline{i+1},j)$ (resp.~ $a_{i,j}(x_{i,j},y_{i,j})$ in $H(i,\underline{j} \rightarrow j+1)$ and the neighbor of $a_{i,j+1}(x_{i,j+1},y_{i,j+1})$ in $H(i,j \rightarrow \underline{j+1})$).
  This is a proper coloring since the $k^2$ pairs form a solution for \textsc{Grid Tiling}, and by construction of the half-graphs.
  Indeed in horizontal (resp.~vertical) half-graphs $H_{i,j \rightarrow j+1}$ (resp.~$H_{i \rightarrow i+1,j}$) the two neighbors are $h_{i,\underline{j} \rightarrow j+1}(x_{i,j},y_{i,j})$ and $h_{i,j \rightarrow \underline{j+1}}(x_{i,j+1},y_{i,j+1})$ (resp.~$h_{\underline{i} \rightarrow i+1,j}(x_{i,j},y_{i,j})$ and $h_{i \rightarrow \underline{i+1},j}(x_{i+1,j},y_{i+1,j})$) which are non-adjacent since $x_{i,j}=x_{i,j+1}$ (resp.~$y_{i,j}=y_{i+1,j}$).
  
  \paragraph*{$\mathbf{\Gamma_b(G) \geqslant q}$ implies a solution to \textsc{$k$-by-$k$ Grid Tiling}.}
  We set $C_d := D \cup C' \cup C^- \cup C^+ \subset C$.
  Let us first show that a vertex of $C \setminus C_d$ has to be a center.
  We thus upperbound the maximum number of centers outside $C \setminus C_d$. 
  We need the following lemma which is true in any graph, not only in the constructed graph $G$.
  Informally it says that vertices with few private neighbors cannot provide too many centers.
  
  \begin{lemma}\label{lem:almost-twins}
    Let $p$ be a non-negative integer, $A$ be a subset of vertices, and $Y$ be their common neighborhood, that is $\bigcap_{v \in A}N(v)$.
    If for every $v \in A$, $\lvert N(v) \setminus Y \rvert \leqslant p$, then the maximum number of centers in $A$ with distinct colors is at most $p+1$.
  \end{lemma}
  \begin{proof}
    For the sake of contradiction, assume that there are $p+2$ centers $v_1, v_2, \ldots, v_{p+2}$ with distinct colors in $A$, say, without loss of generality, $1, 2, \ldots, p+2$.
    Recall that in the context of \bchr (unlike Grundy colorings) all the colors play the same role.
    We reach a contradiction by considering any $v_i$, say $v_1$.
    Vertex $v_1$ colored 1, has to be adjacent to some vertices colored $2, 3, \ldots, p+2$.
    None of these vertices can be in $Y$ since otherwise, the coloring would not be proper.
    But $v_1$ has at most $p$ other neighbors, to accommodate supporting vertices with $p+1$ colors.
  \end{proof}

  Observe that the vertices in an $A_{i,j}$ have only four neighbors (one per neighboring half-graph) that are not in the common neighborhood of $A_{i,j}$.
  Hence \cref{lem:almost-twins} implies that each $A_{i,j}$ has at most $5$ centers (with distinct colors).
  The vertices in a $B_{i,j}$ have no private neighbor, so by the same lemma, they can contain at most one center.
  However if the corresponding $A_{i,j}$ has a center, then no vertex of $B_{i,j}$ can be a center.
  Indeed for a vertex in $A_{i,j}$ to be a supported center, one needs to color at least two vertices of $B_{i,j}$ with distinct colors, say $c$ and $c'$.
  This prevents any vertex of $B_{i,j}$ to be supported, since they would miss either $c$ or $c'$.
  So the maximum number of centers in $A_{i,j} \cup B_{i,j}$ is still 5.
  Thus the total number of centers from all the $A_{i,j} \cup B_{i,j}$ is at most $5k^2$.
  
  We now observe that \cref{lem:almost-twins} also works with a half-graph instead of a biclique between $A$ and $Y$.
  \begin{lemma}\label{lem:almost-twins-HG}
    Assume that $p$ is a non-negative integer and $(A,Y)$ induces a half-graph.
    If for every $v \in A$, $\lvert N(v) \setminus Y \rvert \leqslant p$, then the maximum number of centers in $A$ with distinct colors is at most $p+1$.
  \end{lemma}
  \begin{proof}
    Let $p+2$ centers in $A$: $v_1, v_2, \ldots, v_{p+2}$ with distinct colors, say, $1, 2, \ldots, p+2$, and that $v_{p+2}$ has the highest level in the half-graph among $\{v_1, v_2, \ldots, v_{p+2}\}$.
    We recall that it means that $N(v_{p+2}) \cap Y \subseteq N(v_i) \cap Y$ for every $i \in [p+2]$.
    Then the colors $1, 2, \ldots, p+1$, supporting $v_{p+2}$ cannot come from $Y$.
    A contradiction since $v_{p+2}$ has only at most $p$ other neighbors.
  \end{proof}
  Each set $H_{i,\underline{j} \rightarrow j+1}$, $H_{i,j \rightarrow \underline{j+1}}$, $H_{\underline{i} \rightarrow i+1,j}$, $H_{i \rightarrow \underline{i+1},j}$ has, by \cref{lem:almost-twins-HG}, at most 2 supported centers with distinct colors.
  Indeed, outside the other side of their half-graphs and the shared neighborhood $C^-$ or $C^+$, they have only one private neighbor.
  So the total number of supported centers in all the half-graphs is at most $8k^2$.

  Obviously the private neighbors of the vertices of $C$ cannot be centers, since they have degree 1.
  Overall the number of supported centers with distinct colors is $5k^2+8k^2+[C_d|=13k^2+33 < 14k^2=q$.
  This implies that a vertex of $C \setminus C_d$ is indeed a center.
  Such a vertex has degree $q-1$, so all its vertices should be colored (with distinct colors).
  In particular, the whole clique $C$ should be colored.
  Without loss of generality, let us assume that the clique is colored with $[q-k^2]$, and that $D$ receives the colors from 1 to 18, consistently with \cref{fig:forbiddenColors}.
  By coloring appropriately the private neighbors of the vertices $C$, all the colors in $[q-k^2]$ are realized by $C$.
  Since $\Gamma_b(G) \geqslant q$, actually now $\Gamma_b(G) = q$,  we know that there are $k^2$ additional centers in $V(G) \setminus C$.

  We now prove that the remaining centers are in the $A_{i,j}$.
  A vertex in a $B_{i,j}$ cannot be adjacent to a color used for $C'$, since $C'$ and $A_{i,j}$ are fully adjacent and $N(B_{i,j}) \subseteq D \cup A_{i,j}$.
  Hence such a vertex cannot be a supported center.
  Similarly vertices in the half-graphs cannot be centers.
  Indeed, on one side of the half-graph, they would not find all five colors used for $C^+$ which are excluded in the other side of the half-graph.
  And vice versa, on the other side, they would miss some colors of $C^-$ which are unavailable for the first side.
  Therefore all the $k^2$ remaining centers are in the $A_{i,j}$.

  Let $v \in A_{i,j}$ be a supported center colored $c \in [q-k^2+1,q]$.
  We claim that there cannot be another center with a distinct color in the same $A_{i,j}$.
  The only way for $v$ to be a supported center is that $B_{i,j}$ is colored with the $q-9$ colors which are used for $C \setminus (C' \cup D_{i,j})$ where $D_{i,j} \subset D$ is the set of four vertices fully linked to $B_{i,j}$.  
  (Then the four neighbors of $v$ in the half-graphs have to receive all the colors of $D_{i,j}$.)
  In particular, the only colors that another vertex $v' \neq v \in A_{i,j}$ could get are already realized colors.
  This shows that each $A_{i,j}$ contains exactly one supported center.
  
  We denote by $a_{i,j}(x_{i,j},y_{i,j})$ the unique center in $A_{i,j}$.
  We already observed that the supporting vertices with colors in $D_{i,j}$ have to come from the half-graphs.
  Looking at \cref{fig:forbiddenColors}, for a fixed color of $D_{i,j}$ only one of the four neighboring half-graphs can provide such a color.
  This is due to how we linked the $d_z$ with $B_{i,j}$ and the half-graphs.
  Furthermore the common missing color between two consecutive $A_{i,j}$ has to come from the shared half-graph.
  Again since there is a solution to the \bchr-instance, the neighbors of the centers in the half-graphs form an independent set.
  This means that $x_{i,j} \leqslant x_{i,j+1}$ and $y_{i,j} \leqslant y_{i+1,j}$ for every $i, j \in [k]$.
  Due to the cyclicity of the construction, it implies that $x_{i,j} = x_{i,j+1}$ and $y_{i,j} = y_{i+1,j}$ for every $i, j \in [k]$.
  That in turn implies a solution for the \textsc{Grid Tiling}-instance.
\end{proof}

\section{\pgr and \bchr\ on $K_{t,t}$-free graphs}\label{sec:tractability}

In the following subsection, we prove that both \bchr\ and \pgr\ can be solved in FPT time when all but a bounded number of vertices have bounded degree.
This is a preparatory step to show the tractability in $K_{t,t}$-free graphs.

\subsection{FPT algorithm on almost bounded-degree graphs}\label{subsec:degree}

The technique of random separation~\cite{Cai06,Cai08}, inspired by the color coding technique~\cite{Alon1995}, comes handy when one wants to separate a latent vertex-subset 
of small size from the rest of the graph. 
Consider two disjoint vertex-sets $A$ and $B$. 
By sampling a vertex-subset $S$, with probability $1/2$ for each vertex to be included in $S$, the probability that $S$ contains $A$ and avoids $B$ is at least $2^{-\lvert A\rvert - \lvert B\rvert}$. 
Especially if we choose $B$ as the open neighborhood of $A$, this means that with a reasonably large probability, a random sample $S$ will precisely carve out the connected components of $G[A]$ as long as both $A$ and $B$ have bounded size.   
Notice that some of the connected component of $G[S]$, upon a successful sample $S$, might not be a connected component of $G[A]$ and some extra works would be needed to extract the desired set $A$. 
Nonetheless, the knowledge that the components components of $G[A]$ are captured as connected components of $G[S]$ can be substantially useful. 
A derandomized version of random separation can be obtained with the so-called splitters of Naor~et al.~\cite{NaorSS95} (see also Chitnis et al.~\cite{ChitnisCHPP16}). 
For two disjoint sets $A$ and $B$ of a universe $U$, we say that $S\subseteq U$ is an \emph{$(A,B)$-separating set} if $A\subseteq S$ and $B\cap S=\emptyset$. 

\begin{lemma}[Chitnis et al.~\cite{ChitnisCHPP16}]\label{lem:magicsplitter}
Let $a$ and $b$ be non-negative integers. For an $n$-element universe $U$, there exists 
a family $\cF$ of $2^{O(\min (a,b) \log{(a+b)})} \log n$ subsets of $U$  
such that for any disjoint subsets $A,B\subseteq U$ with $\lvert A\rvert\leqslant a$ and $\lvert B\rvert\leqslant b$, there 
exists an $(A,B)$-separating set $S$ in $\cF$. Furthermore, such a family $\cF$ can be constructed in time $2^{O(\min (a,b) \log{(a+b)})} n \log n$.
\end{lemma}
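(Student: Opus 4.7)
The plan is to derandomize the natural probabilistic construction: if we include each element of $U$ in a random set $S$ independently with probability $p=a/(a+b)$, then for any fixed disjoint pair $(A,B)$ with $|A|\leqslant a$ and $|B|\leqslant b$, the event $\{A\subseteq S\}\cap\{B\cap S=\emptyset\}$ has probability at least $p^a(1-p)^b=2^{-\Theta(\min(a,b)\log(a+b))}$. So one expects a deterministic family of size $2^{O(\min(a,b)\log(a+b))}\log n$. The symmetry $A\leftrightarrow B$, $S\leftrightarrow U\setminus S$ lets us assume $a\leqslant b$, so $\min(a,b)=a$.

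\noindent The derandomization uses, as a black box, the splitter of Naor, Schulman and Srinivasan~\cite{NaorSS95}: there is an explicitly constructible family $\mathcal{H}$ of hash functions $h:U\to[M]$, with $M=\Theta((a+b)^2)$, of cardinality $(a+b)^{O(1)}\log n$, such that for every $T\subseteq U$ with $|T|\leqslant a+b$ some $h\in\mathcal{H}$ is injective on $T$. Given $\mathcal{H}$, set
\[
\mathcal{F}:=\bigl\{\,h^{-1}(R):h\in\mathcal{H},\ R\subseteq[M],\ |R|=a\,\bigr\}.
\]
For the size, $|\mathcal{F}|\leqslant|\mathcal{H}|\cdot\binom{M}{a}\leqslant(a+b)^{O(1)}\log n\cdot(a+b)^{2a}=2^{O(a\log(a+b))}\log n$, as required, and the construction time is of the same order. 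For correctness, given a valid pair $(A,B)$, pad $A$ with arbitrary elements of $U\setminus(A\cup B)$ to size exactly $a$, pick $h\in\mathcal{H}$ injective on the at most $(a+b)$-sized set $A\cup B$, and take $R:=h(A)$. Then $A\subseteq h^{-1}(R)$ and, by injectivity, $h(B)\cap R=\emptyset$, so $B\cap h^{-1}(R)=\emptyset$, and $h^{-1}(R)\in\mathcal{F}$ is the sought separator.

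\noindent The main subtlety is to use the NSS image size $M$, rather than the size of a perfect hash family, as the primary handle on the exponent: the naive approach of taking an $(n,a+b)$-perfect hash family of size $2^{O(a+b)}\log n$ and enumerating $a$-subsets of $[a+b]$ yields only $2^{O(a+b)}\log n$, which is worse than the claim whenever $\min(a,b)\ll a+b$. Choosing instead a splitter with polynomial image size $M=\Theta((a+b)^2)$ pushes the exponential dependency into the combinatorial factor $\binom{M}{a}\leqslant M^a$, and turns the exponent from $a+b$ into $a\log(a+b)=\min(a,b)\log(a+b)$, which is the quantitative point of the lemma.
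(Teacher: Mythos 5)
Your proof is correct, but note that the paper does not prove this lemma at all: it is imported verbatim from Chitnis et al.~\cite{ChitnisCHPP16}, so there is no in-paper argument to compare against. Your reconstruction matches the standard derandomization underlying that reference: reduce by complementation to the case $a\leqslant b$, take an $(n,a+b,\Theta((a+b)^2))$-splitter from Naor, Schulman and Srinivasan~\cite{NaorSS95}, and close under preimages of $a$-subsets of the image, yielding the $\binom{M}{a}\leqslant 2^{O(a\log(a+b))}$ factor; your observation that a plain $(n,a+b)$-perfect hash family would only give $2^{O(a+b)}$ is exactly the quantitative point. Two minor remarks: (i) the padding of $A$ requires $\lvert U\setminus B\rvert\geqslant a$, so it is cleaner to allow $\lvert R\rvert\leqslant a$ rather than $=a$ (this does not affect the asymptotic count); (ii) the statement $p^a(1-p)^b=2^{-\Theta(\min(a,b)\log(a+b))}$ in your motivating paragraph is only an upper bound on the inverse probability in general (e.g.\ for $a=b$ it is $4^{-a}$, not $2^{-\Theta(a\log a)}$), but only the one-sided inequality is needed and the actual derandomized construction is unaffected.
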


\begin{theorem}\label{thm:almostbounded}
Let $G$ be a graph in which at most $s$ vertices have degree larger than $d$. 
Then whether $G$ has a $k$-witness for \bchr\ (\pgr, respectively) can be decided in FPT time parameterized by $k+d+s$.
\end{theorem}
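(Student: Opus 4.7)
The plan is to combine brute-force enumeration over the bounded number of high-degree vertices with random separation (Lemma~\ref{lem:magicsplitter}) to isolate the low-degree portion of a minimal witness, and then a dynamic programme that assembles the witness from small ``gadgets''. Set $V_H := \{v \in V(G) : \deg(v) > d\}$ and $V_L := V(G) \setminus V_H$, so $|V_H| \leqslant s$. A minimal $k$-witness $W$ for \bchr\ (or \pgr) has at most $k^2$ vertices, and splits as $W = W_H \uplus W_L$ with $W_H \subseteq V_H$ and $W_L \subseteq V_L$. First I would guess $W_H \subseteq V_H$ and a colouring $\chi_H : W_H \to [k]$, incurring a factor $2^s \cdot k^s$. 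Since every vertex of $W_L$ has degree at most $d$, $|W_L| \leqslant k^2$ and $|N(W_L) \cap V_L \setminus W_L| \leqslant dk^2$; applying Lemma~\ref{lem:magicsplitter} on the universe $V_L$ with $a = k^2$ and $b = dk^2$ yields a family $\mathcal{F}$ of $2^{O(k^2 \log(dk^2))} \log n$ subsets of $V_L$, one of which, call it $S$, contains $W_L$ while being disjoint from $N(W_L) \cap V_L \setminus W_L$. For such an $S$, every connected component of $G[S]$ meeting $W_L$ lies entirely in $W_L$, so $W_L$ is a union of connected components of $G[S]$ of size at most $k^2$, which I call \emph{gadgets}.

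The key structural observation is that gadgets interact only through $W_H$: distinct connected components of $G[S]$ share no edge in $G[V_L]$, and any $V_L$-neighbour of a vertex $u \in W_L$ that lies in $W$ must be in $u$'s own gadget. Consequently, whether $u$ is a realised center of its colour is determined purely locally, by $\chi_H$ and the colouring of $u$'s gadget, whereas whether a vertex $v \in W_H$ of colour $c$ is realised depends only on the cumulative set of colours contributed to $N(v) \cap W$ by the selected gadgets. This justifies a DP over the gadgets. The state records (a) the subset of $[k]$ already realised by $W_L$-centers, and (b) for each $v \in W_H$, the subset of $[k]$ currently observed on $v$'s $W$-neighbours; its size is $2^{O(ks)}$. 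For each gadget $C$ and each of the at most $k^{|V(C)|} \leqslant k^{k^2}$ proper $k$-colourings of $C$ compatible with $\chi_H$ across $C$--$W_H$ edges, I precompute the ``profile'' of $C$: which colours it realises internally (``all other colours appear'' for \bchr, or ``all strictly smaller colours appear'' for \pgr) together with its additive contribution to each $v \in W_H$'s seen-colour set. These profiles serve as DP transitions, and we accept iff some reachable final state has every $c \in [k]$ realised---either by a gadget, or because some $v \in W_H$ of colour $c$ has its seen-colour set covering $[k] \setminus \{c\}$ (for \bchr) or $[c-1]$ (for \pgr).

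The main conceptual hurdle is to confirm that gadgets compose without hidden colouring conflicts or hidden shared support; this reduces to the two facts highlighted above (no $V_L$-edges between distinct components of $G[S]$, and low-degree $W$-neighbourhoods contained entirely in the single gadget), whence correctness of the DP is straightforward. Summing over all outer choices, the overall running time is $2^s \cdot k^s \cdot 2^{O(k^2 \log(dk^2))} \cdot n \log n \cdot k^{k^2} \cdot 2^{O(ks)}$, which is FPT in $k + d + s$, proving the claim uniformly for \bchr\ and \pgr.
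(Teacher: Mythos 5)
Your proof is correct, and its overall architecture matches the paper's: guess the high-degree part of the witness ($W_H$, called $I$ there), apply random separation / splitters on the low-degree vertices to capture the components of the low-degree part $W_L$ as small connected components of $G[S]$, then reassemble. Both approaches hinge on the same two facts: a minimal witness has at most $k^2$ vertices so $|W_L|\leqslant k^2$ and $|N(W_L)\cap V_L\setminus W_L|\leqslant dk^2$, and distinct components of $G[S]$ interact only through the guessed high-degree set.

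Where you genuinely diverge is the final assembly step. The paper defines a labelling $\ell(v)=N(v)\cap I$, groups the small components of $G[S]$ into $\sim_S$-equivalence classes (label-preserving isomorphism), observes that equivalent components are exchangeable inside a witness, and then exhaustively guesses the multiset of classes used by $H-I$, picking arbitrary representatives and brute-force-checking the result. You replace this by a dynamic programme whose state records the set of colours realised by $W_L$-centres together with, for each $v\in W_H$, the set of colours seen on $N(v)\cap W$; gadget ``profiles'' are the transitions. Your DP is a valid alternative and arguably more transparent about why the pieces compose: the state explicitly encodes exactly what information flows between gadgets, namely what they contribute to realizing $W_H$-centres and which colours they realise internally. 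The paper's route is slightly more generic (it reduces the assembly to a bounded enumeration plus a single brute-force verification, without having to articulate what ``profile'' to carry), which is convenient if one wants a template applicable to other witness-based problems. Both give FPT bounds; your bound $2^s k^s\cdot 2^{O(k^2\log(dk^2))}\cdot k^{k^2}\cdot 2^{O(ks)}\cdot n\log n$ is comparable to the paper's.

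Two small points worth tightening: (i) $\chi_H$ should explicitly be restricted to \emph{proper} colourings of $G[W_H]$ (you never state this, though it is clearly intended); (ii) when you say a gadget ``realises a colour internally'', the realisation of a centre $u$ in a gadget $C$ actually depends on both $C$'s colouring and on $\chi_H$ restricted to $N(u)\cap W_H$ — you note this correctly in the surrounding sentence, but the word ``internally'' slightly understates it. Neither affects correctness.
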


\begin{proof}
Let $X$ be the set of $s$ vertices of degree larger than $d$. 
In order to explain the algorithm and prove its correctness, it is convenient to assume that $G$ does contain a $k$-witness $H$ for \bchr (or \pgr)  as an induced subgraph. 
We define
$
I := V(H)\cap X,  
A := V(H) \setminus X,$ and 
$B := N(A)\setminus X.
$

We can guess $I$ by considering at most $2^s$ subsets of $X$. 
To find $A$, we use~\cref{lem:magicsplitter}.
From the fact that every vertex of $V\setminus X$ has degree at most $d$ and that $H$ is a minimal $k$-witness, we have $\lvert A \rvert  \leqslant k^2$ and $\lvert B \rvert  \leqslant dk^2$. 
Hence, by Lemma~\ref{lem:magicsplitter} with universe $V(G) \setminus X$, we can compute in time $2^{O(k^2 \log{(k^2 + dk^2)})} n\log n$ a family $\mathcal F$ with $2^{O(k^2\log{(k^2 + dk^2)})} \log n$ subsets of $V(G) \setminus X$, that contains an $(A, B)$-separating set. 

We guess this $(A, B)$-separating set by iterating over all elements of $\mathcal F$.
Let $S$ be a correct guess, i.e., $S$ is an $(A,B)$-separating set. 
So $A \subseteq S$ and $S \cap B = \emptyset$. 
Observe that every connected component of $G[A]$ appears in $G[S]$ as a connected component. 

Let $\cC_S$ be the set of connected components of $G[S]$ of size at most $k^2$. Since $|A| \leqslant k^2$, larger connected component of $G[S]$ are clearly disjoint from $A$.
Moreover, by definition of $B$ and since $S$ is disjoint from $B$, each connected component of $G[A]$ is an element of $\cC_S$.

Since each element of $\cC_S$ has at most $k^2$ vertices, the number of equivalence classes of $\cC_S$ under graph isomorphism is bounded by a function of $k$. 
In fact, the number of equivalence classes under a stronger form of isomorphism is bounded by a function of $k$. 
We define a labeling function $\ell: S \rightarrow 2^{I}$ as $\ell(v) := N(v)\cap I$. 
Let $\sim_S$ be a relation on $\cC_S$ such that, for every $C,C' \in \cC'_S$:
$C\sim_S C'$ if and only if there is a graph isomorphism $\phi:C\rightarrow C'$ with $\ell(v)=\ell(\phi(v))$ for every $v\in C$.
Let $[\sim_S]$ be the  partition of $\cC_S$ into equivalence classes under $\sim_S$. 
As members of $\cC_S$ have cardinality at most $k^2$ and there are $2^{|I|} \leqslant 2^{k^2}$ labels, 
$\cC_S$ has at most $2^{2k^4}$ equivalence classes under $\sim_S.$
And thus we can compute $[\sim_S]$ in time $2^{2k^4}n$.   
The definition of $\sim_S$ clearly implies that two equivalent sets $C$ and $C'$ under $\sim_S$ are exchangeable as a connected component of $H-X$. 
That is, for any induced subgraph $D$ of $G$ with $V(D)\cap X=I$, 
if $C$ is a connected component of $D-I$, then $G[(V(D)\setminus C)\cup C']$ is isomorphic to $D$.

We will now guess, by doing an exhaustive search, how many connected components $H-I$ takes from each part of the partition $[\sim_S]$. 
There are  $2^{2k^4k^2}$ possible such guesses and from the fact that the number of connected components in $H-I$ is at most $k^2$. 
Choose an element (i.e. a connected vertex-set) from each part of $[\sim_S]$ as many times as the current guess suggests (if this is impossible, then discard the current guess) and let $W$ be the union of the chosen connected vertex-sets. 
We  can now verify by brute-force that $G[W\cup I]$ is a $k$-witness for \bchr or \pgr, depending on the problem at hand.

To complete the proof of correctness, note that if we find a $k$-witness for some choice of $I$, $S\in \cF$ and $W$, the input graph $G$ clearly admits a $k$-witness.
One can easily observe that the running time is FPT in $k+d+s$.
\end{proof}

\subsection{FPT algorithm on $K_{t,t}$-free graphs}

In this subsection, we present an FPT algorithm on graphs which do not contain $K_{t,t}$ as a subgraph. 
A key element of this algorithm is a combinatorial result (\cref{prop:witnessKttfree}), which states that if there are many vertices of large degree, then one can always find a $k$-witness. 
Therefore, we may assume that the input graph has a small number of vertices of large degree. 
Using the FPT algorithm on almost bounded-degree graphs given in \cref{subsec:degree}, this implies an FPT algorithm on $K_{t,t}$-free graphs.
Most of this subsection is thus devoted to prove our combinatorial tool.

We begin with a few technical lemmas. 
The first lemma is a simple application of a Ramsey-type argument.

\begin{lemma}\label{lem:antibiclique}
Let $t$ and $N$ be two positive integers with $N \geqslant t$, and let $G$ be a graph on a vertex-set $A \uplus B$ not containing  $K_{t,t}$ as a subgraph. 
If $\lvert A \rvert \geqslant N2^{N+t}$ and $\lvert B \rvert \geqslant N+t$, then there exist two  sets $A' \subseteq A$ and $B'\subseteq B$, 
each of size at least $N$, such that there is no edge between $A'$ and $B'$.
\end{lemma}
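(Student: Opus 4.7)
The plan is to reduce to a pigeonhole argument on neighborhoods restricted to a small subset of $B$. First I would pick an arbitrary subset $B_0 \subseteq B$ of size exactly $N+t$ (which is possible since $|B| \geq N+t$). The point of passing to a fixed small $B_0$ is that although $B$ itself might be much larger than $N+t$, forcing too many possible neighborhood patterns, the number of possible \emph{traces} $N(a) \cap B_0$ on vertices $a \in A$ is at most $2^{N+t}$.

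Next, since $|A| \geq N \cdot 2^{N+t}$, the pigeonhole principle yields a set $A' \subseteq A$ of size at least $N$ whose vertices all share the same trace on $B_0$; call this common set $S := N(a) \cap B_0$, identical for every $a \in A'$.

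Now I would use the $K_{t,t}$-freeness to control $|S|$. If $|S| \geq t$, then any $t$ vertices of $A'$ together with any $t$ vertices of $S$ induce a complete bipartite subgraph $K_{t,t}$ (recall $|A'| \geq N \geq t$), contradicting the hypothesis. Hence $|S| \leq t-1$, and I can take $B' := B_0 \setminus S$, which has size at least $(N+t) - (t-1) = N+1 \geq N$. By construction every $a \in A'$ satisfies $N(a) \cap B_0 = S$, so no vertex of $A'$ has a neighbor in $B'$, completing the proof.

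There is essentially no obstacle here; the only subtlety is remembering to restrict to $B_0$ rather than applying pigeonhole directly on $B$ (which could have arbitrarily many subsets). The bound $N \cdot 2^{N+t}$ on $|A|$ is exactly calibrated so that pigeonhole over the $2^{N+t}$ possible traces on $B_0$ yields $N$ vertices with identical trace, and the choice $|B_0| = N+t$ leaves $N+1$ vertices outside $S$ after removing the at most $t-1$ forbidden ones.
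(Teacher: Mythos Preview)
Your proof is correct. It differs from the paper's argument in a pleasant way. The paper fixes $B_0=\{b_1,\ldots,b_{N+t}\}$ and then runs an iterative halving: for each $b_i$ in turn it replaces the current subset of $A$ by whichever of $N(b_i)\cap A_{i-1}$ or $A_{i-1}\setminus N(b_i)$ is larger, ending with $A_{N+t}$ of size at least $N$ on which every $b_i$ is either complete or anti-complete; $K_{t,t}$-freeness then bounds the number of ``complete'' $b_i$'s by $t-1$. You instead apply a single pigeonhole over the $2^{N+t}$ possible traces on $B_0$, obtaining $A'$ whose vertices all have \emph{identical} neighborhood $S$ in $B_0$, and then bound $|S|$ directly. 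Your one-shot argument is a bit cleaner and yields the slightly stronger conclusion that the vertices of $A'$ are twins relative to $B_0$; the paper's focusing argument is the more traditional Ramsey-style presentation and generalizes more readily when one wants weaker homogeneity (e.g.\ only ``complete or anti-complete'' per coordinate rather than a common trace). Both hit the same quantitative bound $N\cdot 2^{N+t}$ exactly.
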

\begin{proof}
We may assume that $B=\{b_1,\ldots, b_{N+t}\}$ by discarding some vertices of $B$ if necessary. 
We are going to construct a sequence $A_1\supseteq A_2 \supseteq \cdots \supseteq A_{N+t}$ of subsets  of $A$ such that 
for each $i$, the vertex $b_i$ is either complete or anti-complete with $A_i$, and the last set $A_{N+t}$ contains at least $N$ vertices. 
Take $A_1$ as the set of neighbors of $b_1$ if $\lvert N(b_1)\cap A \rvert \geqslant \lvert A \rvert/2$, and as the set of  non-neighbors of $b_1$ otherwise.
Supposed that a sequence $A_1 \supseteq \cdots \supseteq A_i$ for $i< N+t$ has been constructed. We define $A_{i+1}$ as follows; 
if $|N(b_i)\cap A_i| \geqslant |A_i|/2$, then $A_{i+1}=N(b_i)\cap A_i$, and  $A_{i+1}=A_i\setminus N(b_i)$ otherwise.
As the step $i+1$ halves $A_{i}$ in the worst case, the size of $A$ ensures that  $A_{N+t}\geqslant N$.

Now, observe that any vertex $b_i$ of $B$ which is complete with $A_i$ is also complete with $A_{N+t}$.
Since $G$ is $K_{t,t}$-free and $\lvert A_{N+t} \rvert \geqslant N \geqslant t$, all but at most $t-1$ vertices of $B$ are anti-complete with their associated set $A_i$, and thus with $A_{N+t}$.
Therefore we can find at least $N$ vertices in $B$ that are anti-complete with  $A_{N+t}$.
We set $B'$ as these vertices and $A' := A_{N+t}$.
\end{proof}

The next lemma can be obtained as a corollary of the previous lemma. 
\begin{lemma}\label{lem:extract}
For any integers $k$ and $t$, there exists an integer $M$ such that the following holds:  
given a $K_{t,t}$-free graph $G$ and a partition 
$A_1\uplus \cdots \uplus A_k$ of $V(G)$ such that each  $A_i$ contains at least $M$ vertices, 
there exists either a clique on $k$ vertices, or an independent set of size $k^2$ which contains $k$ vertices from each  $A_i$.
\end{lemma}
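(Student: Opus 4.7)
The plan is to combine Ramsey's theorem with an iterative application of Lemma~\ref{lem:antibiclique}. First, I would assume that $G$ contains no clique on $k$ vertices, since otherwise the first alternative already holds. Under this assumption, $G[A_i]$ is $K_k$-free for every $i$, so Ramsey's theorem furnishes, inside each $A_i$, an independent set $I_i$ of size at least $N_0$, provided $M \geqslant R(k, N_0)$ for a suitable $N_0 = N_0(k,t)$ to be fixed later.

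Next, I would process the $\binom{k}{2}$ unordered pairs $\{i,j\}$ one by one. At each step the current pair $(I_i, I_j)$ is replaced by an anti-complete pair of subsets, using Lemma~\ref{lem:antibiclique}. Crucially, since at every step we only pass to subsets of the $I_\ell$'s, all anti-completeness established in earlier steps is preserved, and each $I_\ell$ remains an independent set. After all $\binom{k}{2}$ steps, the family $I_1, \ldots, I_k$ is pairwise anti-complete, and any choice of $k$ vertices from each $I_i$ yields an independent set of size $k^2$ with $k$ vertices in each $A_i$, as required.

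The quantitative core of the argument is to choose $N_0$ so that after $\binom{k}{2}$ applications of Lemma~\ref{lem:antibiclique} each set still has size at least $k$ (and at least $t$, so that Lemma~\ref{lem:antibiclique} keeps applying). Setting $f(n) := n \cdot 2^{n+t}$, I would maintain the uniform invariant that after $m$ steps each $|I_i| \geqslant T_m$, where $T_0 := N_0$ and $T_{m+1}$ is any integer $\geqslant \max(k,t)$ such that $f(T_{m+1}) \leqslant T_m$. The invariant is preserved: in the $(m{+}1)$-th step the two sets of the current pair each have size $\geqslant T_m \geqslant f(T_{m+1})$, so Lemma~\ref{lem:antibiclique} (applied with parameter $N := T_{m+1}$) produces anti-complete subsets of size $\geqslant T_{m+1}$; the other $k-2$ sets are untouched and so still of size $\geqslant T_m \geqslant T_{m+1}$. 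Taking $N_0 := f^{\binom{k}{2}}\!\left(\max(k,t)\right)$ and $M := R(k, N_0)$ then works, and both quantities are computable functions of $k$ and $t$.

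The only subtle point is the choice of invariant: a naive accounting that tracks the number of times each individual $I_i$ has been shrunk makes the two sides of the current pair heterogeneous and lets the bounds drift out of control. Maintaining instead a single uniform lower bound $T_m$ on all sets simultaneously sidesteps this and reduces the whole argument to the single recursion $T_{m+1} \leadsto T_m = f(T_{m+1})$, whose inversion directly dictates $N_0$.
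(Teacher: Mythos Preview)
Your proof is correct and follows essentially the same strategy as the paper: iterate Lemma~\ref{lem:antibiclique} over all $\binom{k}{2}$ pairs to make the parts pairwise anti-complete, and invoke Ramsey's theorem for the clique/independent-set dichotomy inside each part. The only difference is the order---you apply Ramsey first (inside each $A_i$) and then shrink via Lemma~\ref{lem:antibiclique}, whereas the paper first makes the $A_i$ pairwise anti-complete and only then applies Ramsey on the resulting small sets---but this is a cosmetic reorganization of the same argument, not a different route.
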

\begin{proof}
We assume that $t$ is at least $k$; if not, increase $t$ appropriately. Clearly, $G$ is $K_{t,t}$-free for the new value $t$. 
  By ${k \choose 2}$ successive applications of Lemma~\ref{lem:antibiclique}, 
  we can select for $i=1, \dots, k$ (by choosing $M$ large enough) a set $A'_i \subseteq A_i$ of $2^{3t}$ vertices, such that there is no edge linking $A'_j$ and $A'_{j'}$ for all $j \neq j'$.
Concretely, fix 
\[
M=({^{k} 8})^{t}:=\underbrace{8^{8^{\cdot^{\cdot^{8^t}}}}}_\text{8 repeats $k$ times}
\]
and apply~\cref{lem:antibiclique} between $A_1$ and $A_2$, which returns $A'_1\subseteq A_1$ and $A'_i\subseteq A_i$ 
of respective size at least $({^{k-1)} 8})^{t}$. Now let $A_1\leftarrow A'_1$ and $A_2\leftarrow A'_2$, and execute the same procedure 
between $A_1$ and $A_i$ for $i=3$ up to $i=k$. Note that at the end of $k-1$ applications of~\cref{lem:antibiclique}, the size of $A_1$ is at least $({8})^{t}$ 
and  $A_i$ has size at least $({^{k-1)} 8})^{t}$ for each $i\in [k]\setminus 1$. 
Note that now $A_1$ is anti-complete with every $A_i$ for $i\in [i]\setminus 1$. 
By induction on $k$, we are left with $A_1, \ldots, A_k$ which are pairwise anti-complete and each of which has size at least $8^t$.
Then one can apply Ramsey's theorem on all $A_i$'s to get either a clique on $k$ vertices in some $A_i$, or an independent set of size $2t \geqslant k$ in every $A_i$ for $i\in [k]$. 
\end{proof}

The following statement is proved in \cite{AboulkerBBCHMZ18}.

\begin{lemma}[Aboulker et al.~\cite{AboulkerBBCHMZ18}]\label{lem:smallfrac}
  Let $t$ be a positive integer and let $\epsilon \in (0,1)$.
  Then there is an integer $N(t,\epsilon)$ that satisfies the following: if $H=(V,E)$ is a hypergraph on at least $N(t,\epsilon)$ vertices, where all hyperedges have size at least $\epsilon|V|$, and the intersection of any $t$ hyperedges has size at most $t-1$, then $\lvert E \rvert < t/\epsilon^t$.
\end{lemma}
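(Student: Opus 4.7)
The natural approach is a double counting (or averaging) argument combined with the forbidden-intersection condition. Write $n := \lvert V \rvert$ and suppose for contradiction that $\lvert E \rvert \geqslant t/\epsilon^t$. The plan is to count in two ways the pairs $(T, e)$ where $T$ is a $t$-element subset of $V$ and $e \in E$ is a hyperedge with $T \subseteq e$, and then exhibit a $T$ lying in at least $t$ distinct hyperedges. Since the intersection of those $t$ hyperedges would then contain $T$ and so have size $\geqslant t$, this will contradict the hypothesis.

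For the lower bound, each hyperedge $e$ contributes exactly $\binom{\lvert e \rvert}{t} \geqslant \binom{\lceil \epsilon n \rceil}{t}$ pairs (this uses only $\lvert e \rvert \geqslant \epsilon n$). Hence the total count is at least $\lvert E \rvert \binom{\lceil \epsilon n \rceil}{t}$. On the other hand there are exactly $\binom{n}{t}$ choices for $T$, so by averaging some $T_0$ is contained in at least
\[
\frac{\lvert E \rvert \binom{\lceil \epsilon n \rceil}{t}}{\binom{n}{t}}
\]
hyperedges. As $n \to \infty$ with $t$ and $\epsilon$ fixed, the ratio $\binom{\lceil \epsilon n \rceil}{t}/\binom{n}{t}$ tends to $\epsilon^t$, so for $n \geqslant N(t,\epsilon)$ sufficiently large the right-hand side exceeds $(1-\epsilon^t/t) \cdot \lvert E \rvert \epsilon^t \geqslant t - 1$. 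Since the number of hyperedges containing $T_0$ is an integer, it is therefore at least $t$, which provides the desired $t$ hyperedges whose intersection contains $T_0$ and hence has size $\geqslant t$, contradicting the assumption.

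The routine step is to verify the asymptotics $\binom{\lceil \epsilon n \rceil}{t}/\binom{n}{t} \to \epsilon^t$ and pick $N(t,\epsilon)$ explicitly; this is a straightforward estimate since each factor $(\lceil \epsilon n \rceil - i)/(n-i)$ for $i \in [0, t-1]$ tends to $\epsilon$. The only mild subtlety, which I see as the one thing meriting care, is the integrality gap: we must make sure the lower bound on the averaged quantity is \emph{strictly} greater than $t-1$ rather than merely $\geqslant t - 1$, so that we can promote the averaged inequality to an integer one and extract $t$ distinct hyperedges. Choosing $N(t,\epsilon)$ large enough that the multiplicative error in the asymptotic is less than $1/t$ suffices.
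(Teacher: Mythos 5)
Your double-counting argument is correct and is the natural proof of this statement; note, though, that the paper you are writing into does not reprove this lemma but cites it from Aboulker et al.~\cite{AboulkerBBCHMZ18}, so there is no paper proof here to compare against. To make the argument fully rigorous you would fix a few small points: you need $n$ large enough that $\lceil\epsilon n\rceil\geqslant t$ (so the binomial coefficients make sense); you should state explicitly that the averaging counts \emph{distinct} hyperedges through $T_0$, so the intersection hypothesis (which concerns $t$ distinct hyperedges) applies; and the quantitative step can be packaged more cleanly by contraposition: if every $t$-set lies in at most $t-1$ hyperedges, then
\[
\lvert E\rvert\,\binom{\lceil\epsilon n\rceil}{t}\;\leqslant\;\sum_{e\in E}\binom{\lvert e\rvert}{t}\;\leqslant\;(t-1)\binom{n}{t},
\]
whence $\lvert E\rvert\leqslant (t-1)\binom{n}{t}/\binom{\lceil\epsilon n\rceil}{t}\to(t-1)/\epsilon^t<t/\epsilon^t$ as $n\to\infty$, giving the bound (in fact a slightly stronger one) without juggling the error factor $(1-\epsilon^t/t)$. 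Your treatment of the integrality gap is the right thing to worry about and is handled correctly.
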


We are ready to prove the key combinatorial result on $K_{t,t}$-free graphs.

\begin{proposition}\label{prop:witnessKttfree}
Let $t, k$ be positive integers.   
  Let $G$ be a $K_{t,t}$-free graph and let $X \uplus Y$ be a partition of $V(G)$.
  There exist integers $f(t,k)$ and $g(t,k)$ such that the following holds: If $|X| \geqslant f(t,k)$, and $|N_Y(x)| \geqslant g(t,k)$ for every $x \in X$,
  then $G$ contains $kK_{1,k}$ as an induced subgraph.
  In particular, $G$ admits $k$-witnesses for \bchr\ and thus for \pgr. 
\end{proposition}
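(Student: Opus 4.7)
The plan is to construct $kK_{1,k}$ as an induced subgraph of $G$ in three stages: two Ramsey-type clean-ups followed by an iterative extraction that exploits $K_{t,t}$-freeness.

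Stage 1 (clean $X$ and the leaf pools). Since $G$ is $K_{t,t}$-free it contains no $K_{2t}$, hence $G[X]$ contains no $K_{2t}$ either. Choosing $f(t,k)$ at least the Ramsey number $R(2t, N_0)$ for some sufficiently large $N_0$, Ramsey's theorem yields an independent set $X^* \subseteq X$ with $|X^*| \geqslant N_0$; the hypothesis $|N_Y(x)| \geqslant g(t,k)$ is preserved on $X^*$. In the same way, for each $x \in X^*$, $G[N_Y(x)]$ is $K_{2t}$-free, so for $g(t,k) \geqslant R(2t, \ell)$ with $\ell$ large enough we extract an independent set $I_x \subseteq N_Y(x)$ of size $\ell$; these $I_x$'s form the pool of candidate leaves.

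Stage 2 (iterative extraction). I would greedily pick $c_1, \dots, c_k \in X^*$ and leaf sets $L_i \subseteq I_{c_i}$ with $|L_i| = k$, such that for all $i \neq j$, $L_i$ is anti-complete to $\{c_j\} \cup L_j$; the set $\bigcup_i(\{c_i\} \cup L_i)$ then induces $kK_{1,k}$. Assume $c_1, \dots, c_{i-1}$ and $L_1, \dots, L_{i-1}$ have been chosen. Choosing $c_i$ requires avoiding $\{c_1,\dots,c_{i-1}\} \cup \bigcup_{j<i} N(L_j)$ in $X^*$: since the bipartite graph between $X^*$ and each $L_j$ is $K_{t,t}$-free, the K\H{o}v\'ari--S\'os--Tur\'an bound gives $|N(L_j) \cap X^*| = O(k \cdot N_0^{1-1/t})$, which is negligible compared to $N_0$ when $N_0 \gg k^t$. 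Choosing $L_i$ requires $k$ vertices in $I_{c_i}$ avoiding $F := \bigcup_{j<i}(N_Y(c_j) \cup N(L_j) \cup L_j)$; since $I_{c_i}$ is already independent, it suffices to ensure $|I_{c_i} \setminus F| \geqslant k$.

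The main obstacle is the last inequality. Each $N_Y(c_j)$ may be large in absolute terms, so $F$ could a priori swallow $I_{c_i}$ entirely, and pairwise intersections $|N_Y(c) \cap N_Y(c')|$ are not controlled by $K_{t,t}$-freeness. The saving grace is that any $t$ of the $I_x$'s (for $x \in X^*$) share at most $t-1$ vertices, so $\{I_x : x \in X^*\}$ is a hypergraph on $Y$ with small $t$-wise intersections. Applying Lemma~\ref{lem:smallfrac} together with a K\H{o}v\'ari--S\'os--Tur\'an bound for $|I_x \cap N(L_j)|$, one argues that only a bounded-in-$(t,k)$ number of candidates $x \in X^*$ can satisfy $|I_x \cap F| > \ell - k$; since $|X^*|$ is much larger than this bound, a good $c_i$ always exists. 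Choosing $N_0$ and $\ell$ (and hence $f(t,k)$ and $g(t,k)$) large enough to absorb the cumulative loss through the $k$ iterations yields $kK_{1,k}$ as an induced subgraph, from which the $b$-coloring of order $k$ (assign color $i$ to $c_i$, and the colors of $[k]\setminus\{i\}$ plus one arbitrary repetition to the leaves of the $i$-th star) provides a $k$-witness for both \bchr\ and \pgr.
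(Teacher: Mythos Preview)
Your Stage~2 has a genuine gap: the assertion that ``only a bounded-in-$(t,k)$ number of candidates $x \in X^*$ can satisfy $|I_x \cap F| > \ell - k$'' is not true for arbitrary prior choices of $c_1,\dots,c_{i-1}$. For $t\geqslant 3$, take $X^* = \{c,x_1,\dots,x_m\}$ independent and pairwise disjoint, pairwise non-adjacent independent sets $S_1,\dots,S_m$ in $Y$, with $N_Y(x_s)=S_s$ and $N_Y(c)=\bigcup_s S_s$; this graph is $K_{t,t}$-free. If your greedy picks $c_1=c$ (a legitimate first choice, since $F=\emptyset$ at that point), then $I_{x_s}\subseteq S_s\subseteq N_Y(c)\subseteq F$ for every $s$, so \emph{every} remaining candidate is bad and the procedure stalls. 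The hypergraph $\{I_x : x\in X^*\}$ does have $t$-wise intersections of size at most $t-1$, but \cref{lem:smallfrac} also requires each hyperedge to occupy an $\epsilon$-fraction of the ground set; since $|Y|$ (or $|N_Y(c_j)|$) is unbounded above while $|I_x|=\ell$ is fixed, no useful choice of $\epsilon$ is available. A secondary issue: the bound $|N(L_j)\cap X^*|=O(kN_0^{1-1/t})$ is not what K\H{o}v\'ari--S\'os--Tur\'an gives---it bounds the \emph{edge} count, not the number of vertices with at least one neighbor in $L_j$, and a single leaf of $L_j$ may well be adjacent to all of $X^*$ without creating any $K_{t,t}$.

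The paper's proof sidesteps this by always taking as the next center a vertex $x$ of \emph{minimum} $|N_Y(x)|$ among the surviving candidates, and applying \cref{lem:smallfrac} with ground set $N_Y(x)$ and $\epsilon=1/k$. Then at most $tk^t$ candidates $v$ satisfy $|N_Y(v)\cap N_Y(x)|\geqslant |N_Y(x)|/k$; those are discarded, and every other $v$ loses fewer than $|N_Y(x)|/k\leqslant |N_Y(v)|/k$ neighbors when $N_Y(x)$ is deleted from $Y$, so the degree hypothesis survives (up to a $(1-1/k)$ factor) and the induction proceeds. The minimality of $|N_Y(x)|$ is precisely what rules out the configuration above. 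The leaf sets are only extracted at the very end, via \cref{lem:extract}, which simultaneously produces independence inside each $A_i$ and anti-completeness between the $A_i$'s---so no pre-committed pools $I_x$ are needed.
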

\begin{proof}
We first observe that $kK_{1,k}$ (even $kK_{1,k-1}$) is a $k$-witness for \bchr: color the $k$ centers with distinct colors, and assign colors from $[k] \setminus \{i\}$ to the leaves of the center colored $i$. 
Let $N(t,1/k)$ and $M$ be the integers defined in~\cref{lem:smallfrac,lem:extract} respectively, and set 
$M' := \max(M,N(t,1/k)), f(t,k) := 2^{2t + k(tk^t + t)}, g(t,k) := 2^{k(tk^t+t)}M'$.

By Ramsey's theorem, any graph on at least $f(t,k)$ vertices admits either a clique of size $2t$ or an independent set of size $k(tk^t + t)$.
Since $G[X]$ (which has at least $f(t,k)$ vertices) is $K_{t,t}$-free, the former outcome is impossible, so it has an independent set of size $k(tk^t + t)$.
It should be noted that the inductive proof of Ramsey's theorem yields a greedy linear-time algorithm which outputs a clique or an independent set of the required size.
Hence we efficiently find an independent set of size $k(tk^t + t)$ in $G[X]$. 
Starting from $j=1$, we now prove the following claim inductively for all $j \leqslant k$.

\begin{quote}
$(\star)$ If $\lvert X \rvert \geqslant j(tk^t + t)$ and 
$|N_Y(x)| \geqslant 2^{j(tk^t+t)}M'$ for every $x \in X$,
then there are $j$ vertices $\{b_1,\ldots , b_j\}\subseteq X$ 
and a family of $j$ disjoint vertex-sets $A_1,\ldots , A_j\subseteq Y$ each of size at least $M'$, such that each $A_i$ are private neighbors 
of $b_i$; that is, the vertices of $A_i$ are adjacent with $b_i$ and not adjacent with any other vertices from $\{b_1, \dots, b_j\}$.
\end{quote}

The claim $(\star)$ trivially holds when $j=1$. Suppose it holds for all integers smaller than $j$, where $2\leqslant j\leqslant k$. 
We may assume that $X$ has precisely $j(tk^t + t)$ vertices by discarding some vertices if its size exceeds the bound. 
For each $\emptyset \neq I \subseteq X$, we define $N_I=\bigcap_{v\in I}N_Y(v)\cap \bigcap_{v\in X\setminus I} Y\setminus N_Y(v)$.
Thus $N_I$ corresponds to all the vertices of $Y$ whose neighborhood in $X$ is exactly $I$.
Observe that the $N_I$'s partition $Y$ and that $N_I$ corresponds to the set of vertices of $Y$ that are complete with $I$ and anti-complete with $X-I$. 

Choose a vertex $x\in X$ that minimizes $|N_Y(x)|$. 
As there are $2^{j(tk^t+t)}$ possible subsets of $X$ and $|N_Y(x)| \geqslant 2^{j(tk^t+t)}M'$, there exists  $I^*\subseteq X$ such that $x \in I^*$ and $|N_{I^*}|\geqslant M'$. 

Let $X_x$ 
be the set of vertices in $X$ adjacent with at least $k$-th fraction of $N_Y(x)$, that is, 
\[X_x=\{v\in X: |N_Y(v)\cap N_Y(x)|\geqslant \frac{|N_Y(x)|}{k}\}.\]
Set $X' = X - (I^* \cup X_x)$,  $Y' = Y - N_Y(x)$ and let $G' = G[X' \cup Y']$. 

We want to apply the induction hypothesis on  $G'$ with respect to $X'$ and $Y'$.  
For this, we need to make sure that it satisfies the conditions of $(\star)$ for $j-1$. 
To prove that $|X'| \geqslant(j-1)(tk^t + t)$, we need to bound the size of $I^*$ and $X_x$. To bound the size of $I^*$, notice that $N_{I^*}$ is complete with $I^*$. Since $|N_{I^*}|\geqslant M\geqslant t$ and $G$ is $K_{t,t}$-free, we conclude that $|I^*|<t$. 
To bound the size of $X_x$, we apply Lemma~\ref{lem:smallfrac} with $\epsilon=1/k$ to the hypergraph on the vertex-set $N_Y(x)$ 
and with hyperedge set $\{N_Y(v)\cap N_Y(x): v\in X\}$. 
Each hyperedge of size at least $\frac{|N_Y(x)|}{k}$ corresponds to a vertex in $X_x$ which gives us the bound $|X_x|\leqslant  tk^t$. Notice that Lemma~\ref{lem:smallfrac} can be legitimately applied on this 
hypergraph as $N_Y(x)$ has at least $M' \geqslant N(t,1/k)$ vertices. 
Therefore, we have 
$|X'| \geqslant |X| -t-tk^t\geqslant (j-1)(tk^t + t). $

It remains to verify that each  $v\in X'$ has at least $2^{(j-1)(tk^t+t)}M'$ neighbors in $Y'$. Indeed
\begin{align*}
|N_{Y'}(v)| &\geqslant |N_Y(v)| - |N_Y(v) \cap N_Y(x)| 
 		 \geqslant |N_Y(v)| - \frac{|N_Y(x)|}{k} 
		 \geqslant |N_Y(v)| - \frac{|N_Y(v)|}{k} \\
			&\geqslant \frac{k-1}{k}2^{j(tk^t+t)}M'
			 \geqslant  2^{(j-1)(tk^t+t)}M'.		
\end{align*}

This proves that $G'$ meets the requirement to apply the induction hypothesis, and thus we can find $\{b_2,\ldots , b_j\}$ and sets $A_2,\ldots , A_j$ in $G'$ as claimed in $(\star)$. 
Observe now that $N_{I^*}$ is anticomplete to $\{b_2,\ldots , b_j\}$ and recall that $|N_{I^*}| \geqslant M'$. Hence, setting $b_1 = x$ and $A_1 = N_{I*}$ complete the proof of $(\star)$. 
Now, applying Lemma~\ref{lem:extract} to the sets $A_1\uplus \cdots \uplus A_k$  given by $(\star)$ gives us either a clique on $k$ vertices or the announced set of stars. 
\end{proof}

Combined with the main result of the previous subsection, this implies that \bchr\ and \pgr\ can be solved in FPT time on $K_{t,t}$-free graphs.
Observe that our algorithm is FPT in the combined parameter $k+t$, which is a stronger than having an FPT algorithm in $k$ when $t$ is a fixed constant.
\begin{theorem}\label{thm:fptKttfree}
There is a function $h$ and an algorithm which, given a graph $G=(V,E)$ not containing $K_{t,t}$ as a subgraph, 
decides whether $G$ admits $k$ \bchr (\pgr, respectively) in time $h(k,t)n^{O(1)}$.
\end{theorem}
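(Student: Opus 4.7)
The plan is to combine the two main tools of the section into a clean win/win dichotomy. The combinatorial \cref{prop:witnessKttfree} supplies functions $f(t,k)$ and $g(t,k)$ such that, in a $K_{t,t}$-free graph, the existence of $f(t,k)$ vertices each having $g(t,k)$ neighbors in the complementary set already forces the presence of an induced $kK_{1,k}$, which is a $k$-witness for both \bchr and \pgr. On the other hand, the algorithmic \cref{thm:almostbounded} gives an FPT algorithm whenever all but a bounded number of vertices have bounded degree. My strategy is to choose a single degree threshold so that exactly one of these two tools always applies.

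Concretely, I would set $d := f(t,k) + g(t,k)$ and let $X \subseteq V(G)$ be the set of vertices of degree strictly larger than $d$. I then branch on $|X|$. If $|X| \geqslant f(t,k)$, I arbitrarily trim $X$ down to exactly $f(t,k)$ vertices and set $Y := V(G) \setminus X$. Each $x \in X$ has at most $|X|-1 = f(t,k)-1$ neighbors inside $X$, so $|N_Y(x)| \geqslant (d+1) - (f(t,k)-1) \geqslant g(t,k)$. The hypotheses of \cref{prop:witnessKttfree} are thus satisfied, so $G$ contains an induced $kK_{1,k}$ and we may answer \textsc{Yes} in polynomial time.

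In the opposite case, $|X| < f(t,k)$, meaning that all but at most $f(t,k) - 1$ vertices of $G$ have degree at most $d$. This is exactly the almost bounded-degree setting of \cref{thm:almostbounded}, which I invoke with $s := f(t,k) - 1$ and the same $d$. Since both $s$ and $d$ are computable functions of $k + t$, this runs in time $h'(k,t)\, n^{O(1)}$ for some computable $h'$. Taking the maximum over the two branches gives the required bound $h(k,t)\, n^{O(1)}$.

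There is essentially no obstacle beyond choosing the threshold correctly: the whole argument is a one-line dichotomy once the parameters are balanced. The only subtlety worth double-checking is the trimming step in the first case, namely that after restricting to any $f(t,k)$ vertices of $X$, each of them still has at least $g(t,k)$ neighbors outside of the trimmed set. This is precisely why the threshold is chosen as the sum $f(t,k) + g(t,k)$ and not merely $g(t,k)$; with this calibration, both \cref{prop:witnessKttfree,thm:almostbounded} slot in without any modification.
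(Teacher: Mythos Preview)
Your proposal is correct and follows essentially the same win/win dichotomy as the paper: threshold the degree at $f(t,k)+g(t,k)$, declare \textsc{Yes} via \cref{prop:witnessKttfree} if at least $f(t,k)$ vertices exceed the threshold, and otherwise invoke \cref{thm:almostbounded}. In fact your write-up is slightly more careful than the paper's, which applies \cref{prop:witnessKttfree} directly without spelling out why each high-degree vertex retains $g(t,k)$ neighbors in the complement; your trimming step makes this explicit.
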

\begin{proof}
Let $X\subseteq B$ be the set of all vertices whose degree is at least $g(t,k)+f(t,k)$, where $g(t,k)$ and $f(t,k)$ are the 
integers as in \cref{prop:witnessKttfree}. If $X$ contains at least $f(t,k)$ vertices, then we there exists a $k$-witness 
in $G$ by \cref{prop:witnessKttfree}. If $X$ contains less than $f(t,k)$ vertices, 
the algorithm of Theorem~\ref{thm:almostbounded} can be applied to correctly decide whether $G$ contains a $k$-witness.
\end{proof}

\end{document}